\newtheorem{definition}{Definition}
\newtheorem{proposition}{Proposition}
\newtheorem{remark}{Remark}
\newtheorem{theorem}{Theorem}
\title{\LARGE \bf
Stability/instability study of density systems and control law design
}
\author{Igor Furtat$^{1}$
\thanks{$^{1}$Igor Furtat is with IPME RAS, St. Petersburg, Bolshoj pr. V.O., 61, 199178, Russia
        {\tt\small cainenash@gmail.com}}%
}
\begin{document}

\maketitle
\thispagestyle{empty}
\pagestyle{empty}

\begin{abstract}

The paper considers some class of dynamical systems that called density systems. 
For such systems the derivative of quadratic function depends on so-called density function.
The density function is used to set the properties of phase space, therefore, it influences the behaviour of investigated systems. 
A particular class of such systems is previously considered for (in)stability study of dynamical systems using the flow and divergence of a phase vector.
In this paper, a more general class of such systems is considered, and it is shown that the density function can be used not only to study (in)stability, but also to set the properties of space in order to change the behaviour of dynamical systems.
The development of control laws based on use the density function for systems with known and unknown parameters is considered.
All obtained results are accompanied by the simulations illustrating the theoretical conclusions.

\end{abstract}

\section{INTRODUCTION}

The paper considers a class of dynamical systems in normal form,
where the right-hand side depends on some function that sets the properties of phase space and affects the behaviour of these systems.
This function is called the \textit{density function}.
All relevant definitions will be considered in the next section of the paper.

A particular class of such systems is considered in  
\cite{Zaremba54,Krasnoselski63,Fronteau65,Brauchli68,Zhukov78,Shestakov78,Jukov90,Jukov99,Rantzer01,Furtat20a,Furtat20b,Furtat21,Furtat22}, 
where a new system
$\dot{x}=\rho(x)f(x)$ is introduced for stability study of initial one $\dot{x}=f(x)$, where $\rho(x)>0$ is some auxiliary function.
Then the problems of (in)stability of such systems are studied using the properties of divergence and flow of the phase vector.
In \cite{Rantzer01} the function $\rho(x)$ is first called the density function,
and in \cite{Furtat20a,Furtat20b,Furtat21,Furtat22} it is shown the relationship between the obtained (in)stability criteria and the continuity equation,
which describes various processes in electromagnetism \cite{Griffiths17}, wave theory \cite{Arnold14}, hydrodynamics \cite{Pedlosky79}, mechanics of deformable solids \cite{Arnold14}, and quantum mechanics \cite{McMahon13}.

The papers \cite{Liberzon13,Bechlioulis14,Berger18,Furtat21b,Furtat21c} propose a number of control laws,
guaranteeing the presence of outputs in the sets specified by the designer.
To achieve such goal a some auxiliary function is introduced to the right-hand side of the original or transformed system to get
given properties in the closed-loop system.
Thus, in \cite{Liberzon13,Berger18} the control law with a funnel effect is proposed.
In \cite{Bechlioulis14} the prescribed performance control law is obtained,
which guarantees the finding of transients in a tube converging to a neighborhood of zero.
Differently from \cite{Liberzon13,Bechlioulis14,Berger18}, in \cite{Furtat21b,Furtat21c} new control laws allow to guarantee the location of outputs in the set that may be asymmetric with respect to the equilibrium position and does not converge to a given constant.

Differently from existing results, we consider a new class of dynamical systems that explicitly or implicitly depend on the density function.
By using this function, the density of space can be changed in the sense of selection of 
(in)stability regions, restricted regions (where there are no system solutions), and the regions with different values of the density that affects
on the behaviour of the investigated systems.

Thus, the contribution of the present paper is as follows:
\begin{enumerate}
\item[(i)] differently from \cite{Zaremba54,Krasnoselski63,Fronteau65,Brauchli68,Zhukov78,Shestakov78,Jukov90,Jukov99,Rantzer01,Furtat20a,Furtat20b,Furtat21,Furtat22} we consider systems where the density function is not necessarily multiplied by its entire right side;
\item [(ii)] in contrast to \cite{Liberzon13,Bechlioulis14,Berger18,Furtat21b,Furtat21c}, the density function may be presented implicitly on the right-hand side of the system;
\item [(iii)] differently from \cite{Liberzon13,Bechlioulis14,Berger18,Furtat21b,Furtat21c}, the density function can guarantee solutions in an unbounded set or the set with piecewise continuous in time boundaries;
\item [(iv)] in contrast to \cite{Furtat21b,Furtat21c}, the development of control laws does not require coordinate transformations, which makes it possible to consider the systems with unknown parameters.
\end{enumerate}

The paper is organized as follows.
Section \ref{Sec2} gives motivating examples, definitions of the density function and density systems.
Some properties of these systems are proved.
Section \ref{Sec3} proposes the application of the obtained results to design control laws for systems with known and unknown parameters.
The simulations of the obtained control schemes illustrate the confirmation of theoretical conclusions.

The following \textit{notations} are used in this work:
$\mathbb R^{n}$ is an $n$-dimensional Euclidean space with norm $|\cdot|$;
$\mathbb R_+$ ($\mathbb R_-$) is a set of positive (negative) real numbers;
$\mathbb R^{n \times m}$ is a set of all $n \times m$ real matrices.

\section{\uppercase{Motivating examples. Definitions}}
\label{Sec2}

Before introducing definitions, consider two examples.

\textit{Example 1.}
It is well known that the solutions of the system
\begin{equation}
\label{eq_ex1_00}
\begin{array}{l} 
\dot{x}=-x
\end{array}
\end{equation} 
converge asymptotically to zero, where $x \in \mathbb R$.
Multiplying the right-hand side of \eqref{eq_ex1_00} by a piecewise continuous in $t$ and locally Lipschitz in $x$ function $\rho(x,t): \mathbb R \times [0, +\infty) \to \mathbb R$, rewrite \eqref {eq_ex1_00} as follows
\begin{equation}
\label{eq_ex1_0}
\begin{array}{l}
\dot{x}=-\rho(x,t) x.
\end{array}
\end{equation}
A behaviour of the system \eqref{eq_ex1_0} depends on the properties of $\rho(x,t)$.
Using the function $\rho(x,t)$, one can set some properties and restrictions in the space $(x,t)$.
Thus, it is possible to influence the quality of transients of the original system \eqref{eq_ex1_00} and change its qualitatively.
Therefore, $\rho(x,t)$ is called the \textit{density function}.
Let us consider several examples how this function influences on behaviour of the system \eqref{eq_ex1_0}.

1) The density function $\rho(x,t)=\alpha >0$ holds the equilibrium $x=0$ and takes the same positive value for any $x$ and $t$, thus, it does not qualitatively affect the exponential stability of the trajectories of the original system \eqref{eq_ex1_00} (see Fig. \ref{Fig1}, \textit{a}). 
The value of $\rho(x,t)$ is influenced on the rate of convergence of the solution \eqref{eq_ex1_0} to the equilibrium point depending on the value of $\alpha$.
Indeed, choosing quadratic function (it can be Lyapunov function) $V=0.5x^2$, we obtain $\dot{V}=-\alpha x^2 < 0$ in the region $D_S =\mathbb R \setminus \{0\}$.

2) The density function $\rho(x,t)=\frac{\alpha}{w(t)-|x(t)|}$ with a piecewise continuous function $w(t)>0$ preserves a unique equilibrium point $x=0$ and takes positive values in the region $D_S=\{x \in \mathbb R: -w<x<w\}$, while for $|x| \to w-0$ we have $\rho(x,t) \to +\infty$.
These properties guarantee the asymptotic stability of the solutions \eqref{eq_ex1_0} in the region $D_S$, as well as the trajectories of \eqref{eq_ex1_0} never leave this region (see Fig. \ref{Fig1}, \textit{b}).
Choosing the quadratic function $V=0.5x^2$, we obtain $\dot{V}=-\frac{\alpha}{w-|x|} x^2 < 0$ in the domain $x \in D_S \setminus \{0\}$, that confirms the conclusion drawn.

\begin{figure}[h!]
\begin{center}
\begin{minipage}[h]{0.8\linewidth}
\center{\includegraphics[width=0.6\linewidth]{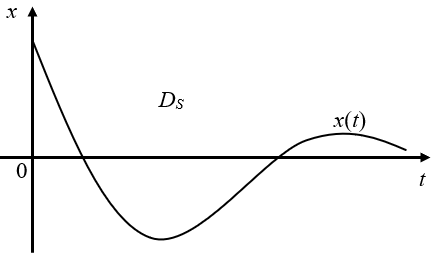}} \\ \textit{a}
\end{minipage}
\vfill
\begin{minipage}[h]{0.8\linewidth}
\center{\includegraphics[width=0.6\linewidth]{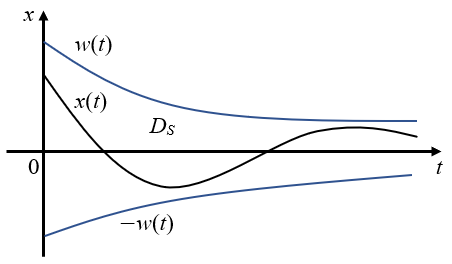}} \\ \textit{b}
\end{minipage}
\caption{The transients in the system \eqref{eq_ex1_0} with the density functions $\rho(x,t)=\alpha$ (\textit{a}) and $\rho(x,t)=\frac{\alpha}{w(t)-| x(t)|}$ (\textit{b}).}
\label{Fig1}
\end{center}
\end{figure}

3) Considering the density function $\rho(x,t)=\alpha [x(t)-w(t)] \textup{sign}(x)$ with a piecewise continuous function $w(t)$, we have two equilibriums $x=0$ and $x=w$.
The function $\rho(x,t)$ takes a positive value in the region $D_S=\{x \in \mathbb R: x \in (-\infty;0) \cup (w;+\infty)\}$ and negative value in $D_U=\{x \in \mathbb R: x \in (0;w)\}$.
In $D_S$ the system \eqref{eq_ex1_0} is stable, while it is unstable in $D_U$, which guarantees that $x(t)$ follows $w(t)$ (see Fig. \ref{Fig2}, \textit{a}) .
Choosing the quadratic function $V=0.5x^2$, we get $\dot{V}=-\alpha [x-w] \textup{sgn} (x) x^2 <0$ for $x \in D_S$ and $\dot{V}>0$ for $x \in D_U$.

4) Considering the density function $\rho(x,t)= -\alpha \ln \frac{\overline{w}(t)-x(t)}{x(t)-\underline{w}(t) }$ with piecewise continuous functions $w(t)>0$, $\overline{w}(t)>\underline{w}(t)>0$ we have one equilibrium $x(t)=w(t)=0.5[\overline {w}(t)+\underline{w}(t)]$.
The function $\rho(x,t)$ takes positive values in the region $D_S=\{x \in \mathbb R: w < x < \overline{w} \}$ and negative values in the region $D_U=\{x \in \mathbb R: \underline{w}<x <w \}$.
Hence, the system \eqref{eq_ex1_0} is stable in $D_S$, and it is unstable in $D_U$. 
Therefore, $x(t)$ follows $w(t)$, while in the shaded region the system \eqref{eq_ex1_0} has no solutions (see Fig. \ref{Fig2}, \textit{b}).
Moreover, the trajectories of \eqref{eq_ex1_0} never leave the region $D_S \cup D_U$ because $|\rho(x,t)| \to +\infty$ as $x$ approaches $\underline{w}$ and $\overline{w}$.
Choosing the quadratic function $V=0.5x^2$, we obtain $\dot{V}=\alpha \ln \frac{\overline{w}-x}{x-\underline{w}} x^2<0 $ for $x \in D_S$ and $\dot{V}>0$ for $x \in D_U$.

\begin{figure}[h!]
\begin{center}
\begin{minipage}[h]{0.8\linewidth}
\center{\includegraphics[width=0.6\linewidth]{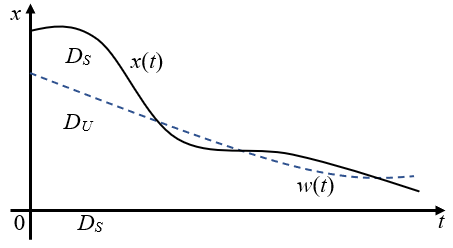}} \\ \textit{a}
\end{minipage}
\vfill
\begin{minipage}[h]{0.8\linewidth}
\center{\includegraphics[width=0.6\linewidth]{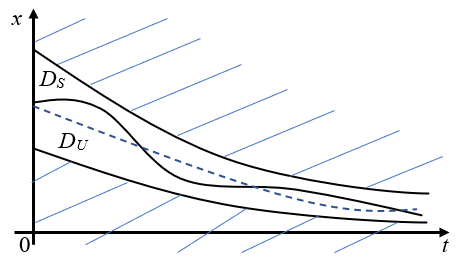}} \\ \textit{b}
\end{minipage}
\caption{The transients in the system \eqref{eq_ex1_0} with density functions $\rho(x,t)=\alpha [x(t)-w(t)] \textup{sign}(x)$ (\textit{a}) and $\rho (x,t)=-\alpha \ln \frac{\overline{w}(t)-x(t)}{x(t)-\underline{w}(t)}$ (\textit{b}).}
\label{Fig2}
\end{center}
\end{figure}

5) Considering the density function $\rho(x,t)= \alpha \ln (x(t)-g(t))$ with a piecewise continuous function $g(t)>0$, we have one equilibrium $ x(t)=w(t)=1+g(t)$.
The function $\rho(x,t)$ takes positive values in the region $D_S=\{x \in \mathbb R_+: w< x < +\infty\}$ and negative values in the region $D_U=\{x \in \mathbb R_+: g<x<w\}$.
Hence, the system \eqref{eq_ex1_0} is stable in $D_S$ and unstable in $D_U$, as well as there are no solutions of the system in the shaded region, which guarantees that $x(t)$ follows the trajectory $w(t)$ and slides along the shaded region (see Fig. \ref{Fig3}).
In this case, the trajectories of the system never enter the shaded region, because $\rho(x,t) \to -\infty$ when trajectory approaches the boundary $g(t)$.
Choosing the quadratic function $V=0.5x^2$, we get $\dot{V}=-\alpha \ln (x-g) x^2<0$ for $x \in D_S$ and $\dot{V}>0 $ for $x \in D_U$.

\begin{figure}[h]
\center{\includegraphics[width=0.5\linewidth]{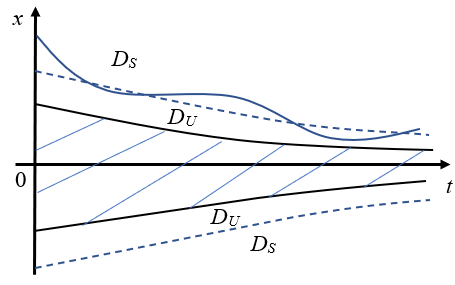}}
\caption{The transients in the system \eqref{eq_ex1_0} with the density function $\rho(x,t)=\alpha \ln (x(t)-g(t))$.}
\label{Fig3}
\end{figure}

Summarizing, Example 1 shows how the density function $\rho(x,t)$ given in the space $(x,t)$ can qualitatively influence the transients of the original system \eqref{eq_ex1_00}.
Unlike Example 1 in the following one, we will show that the density function can be explicitly or implicitly presented on the right-hand side of the system.


\textit{Example 2.}
Consider the system
\begin{equation}
\label{eq_ex2_1}
\begin{array}{l}
\dot{x}_1=x_2-\rho_1(x,t) x_1,
\\
\dot{x}_2=-x_1-\rho_2(x,t) x_2,
\end{array}
\end{equation}
where $\rho_1(x,t)$ and $\rho_2(x,t)$ are piecewise continuous functions in $t$ and locally Lipschitz in $x$ on $\mathbb R^2 \times [0, + \infty)$.
Consider the quadratic function
\begin{equation}
\label{eq_ex2_2}
\begin{array}{l}
V=0.5(x_1^2+x_2^2).
\end{array}
\end{equation}
Taking the derivative of \eqref{eq_ex2_2} along the solutions \eqref{eq_ex2_1}, we get
\begin{equation}
\label{eq_ex2_3}
\begin{array}{l}
\dot{V} = - \rho_1 x_1^2 - \rho_2 x_2^2.
\end{array}
\end{equation}

1) Let $\rho_1=\rho_2=-\ln \frac{\overline{g}-|x_1|^{\beta}-|x_2|^{\beta}}{|x_1|^{\beta}+| x_2|^{\beta}-\underline{g}}$,
where $\beta>0$ and $\overline{g}(t)>\underline{g}(t)>0$ are piecewise continuous functions.
Then $\dot{V}=- \rho(x,t)(x_1^2+x_2^2)$,
where $\rho(x,t)=-\ln \frac{\overline{g}-|x_1|^{\beta}-|x_2|^{\beta}}{|x_1|^{\beta}+| x_2|^{\beta}-\underline{g}}>0$ in
$D_S=\{x \in \mathbb R^2: \underline{g}<|x_1|^{\beta}+|x_2|^{\beta}<\overline{g}\}$ and
$\rho(x,t)<0$ in $D_U=\{x \in \mathbb R^2: \overline{g}<|x_1|^{\beta}+|x_2|^{\beta}< \underline{g}\}$.
In this case, the density function $\rho(x,t)$ is explicitly presented in the system \eqref{eq_ex2_1},
but it is not multiplied by the entire right side, unlike in example 1.
Fig.~\ref{Fig4} shows simulation results for $\beta=1$, $\overline{g}=3$, $\underline{ g}=2$ (\textit{a}) and $\beta=0.6$, $\overline{g}=3^{0.6}$, $\underline{g}=1$ (\textit{b}) with $x(0)=col\{0,2.5\}$.

Here and below:
\begin{itemize}
\item gray areas in the figures mean that the density function is chosen such that there are no solutions of the system in these areas (the density value increases to infinity on the boundary of this area);
\item the dotted curve indicates the equilibrium position and, accordingly, the boundary between the stable $D_S$ and unstable $D_U$ regions.
\end{itemize}

2) Let $\rho_1=1-x_1^2$ and $\rho_2=-x_1^2$.
Then $\dot{V}=- \rho(x,t)x_1^2$,
where $\rho(x,t)=x_1^2+x_2^2-1>0$ in $D_S=\{x \in \mathbb R^2: x_1^2+x_2^2>1\}$ and
$\rho(x,t)<0$ in $D_U=\{x \in \mathbb R^2: x_1^2+x_2^2<1\}$.
In this case, the density function $\rho(x,t)$ is presented implicitly in \eqref{eq_ex2_1},
unlike the previous case.
Fig.~\ref{Fig5} shows the simulation results for $x(0)=col\{2,1\}$.

3) Let $\rho_1=\alpha \ln(|x_1|^{\beta}+|x_2|^{\beta}-1)$, $\alpha>0$, $\beta>0$ and $\rho_2=0$.
Then $\dot{V}=- \rho(x,t)x_1^2$,
where $\rho(x,t)=\alpha \ln(|x_1|^{\beta}+|x_2|^{\beta}-1)>0$ in $D_S=\{x \in \mathbb R ^2:|x_1|^{\beta}+|x_2|^{\beta}>1\}$ and
$\rho(x,t)<0$ in $D_U=\{x \in \mathbb R^2: |x_1|^{\beta}+|x_2|^{\beta}<1\}$.
Unlike cases 1 and 2 now the density function $\rho(x,t)$ is presented in only one of the equations \eqref{eq_ex2_1}.
Fig.~\ref{Fig6} shows simulation results for $\beta=1$ (\textit{a}) and $\beta= 0.5$ (\textit{b}), as well as $\alpha=20$ and $x(0)=col\{2,2\}$.

\begin{figure}[h]
\begin{center}
\begin{minipage}[h]{0.8\linewidth}
\center{\includegraphics[width=0.7\linewidth]{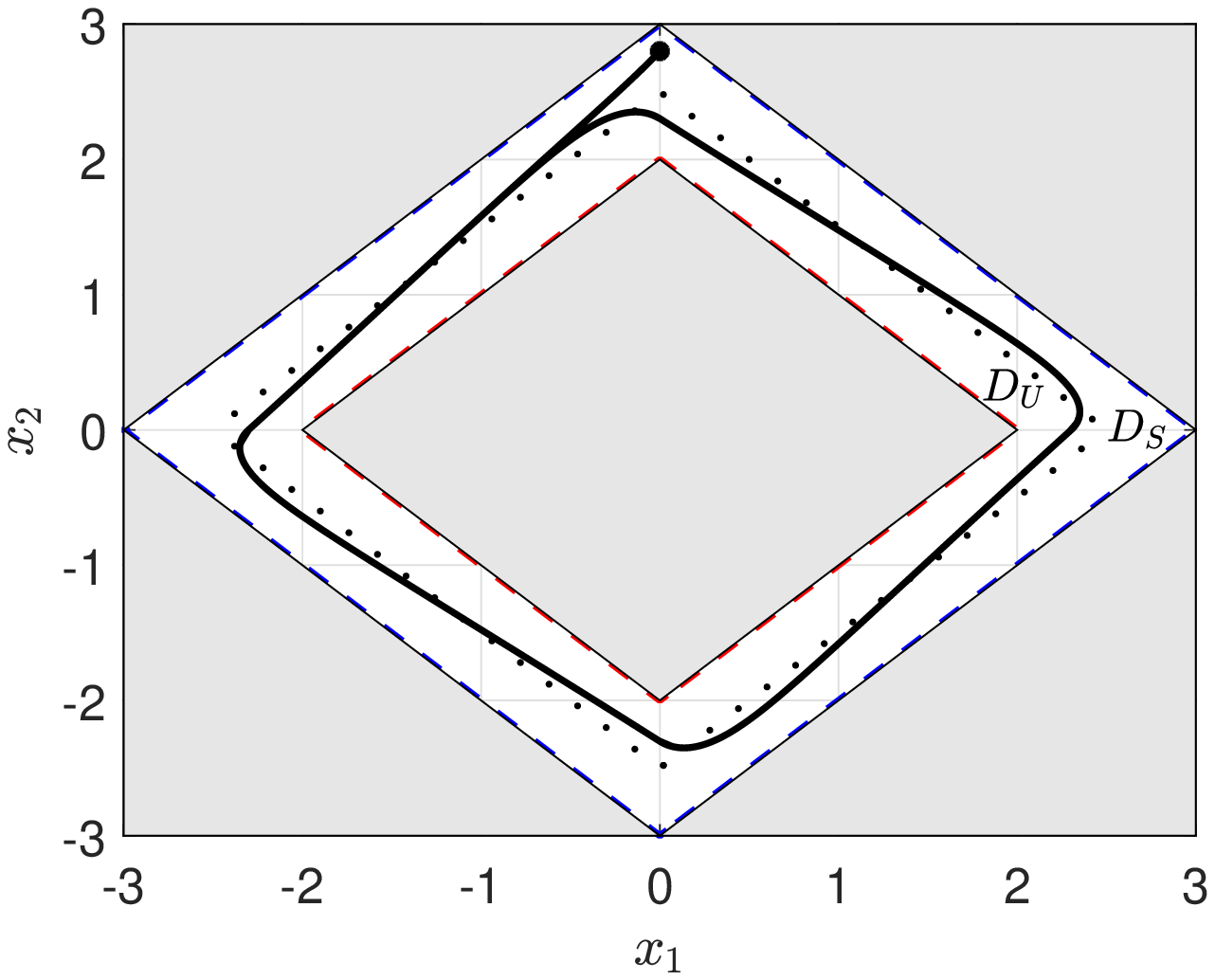}} \\ \textit{a}
\end{minipage}
\vfill
\begin{minipage}[h]{0.9\linewidth}
\center{\includegraphics[width=0.6\linewidth]{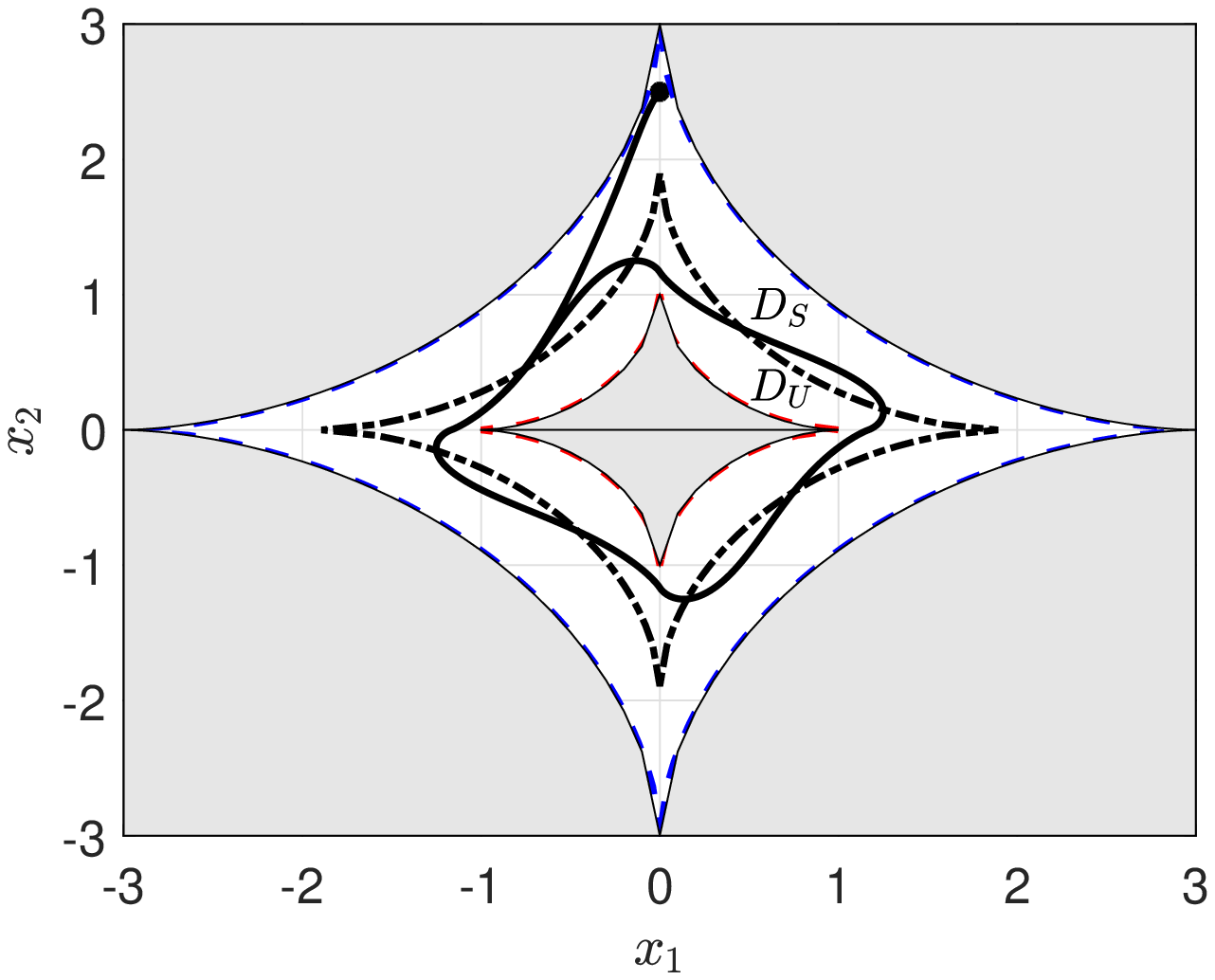}} \\ \textit{b}
\end{minipage}
\caption{The phase trajectories of the system \eqref{eq_ex2_1} with density functions $\rho(x,t)=-\ln \frac{3-|x_1|-|x_2|}{|x_1|+|x_2|-2}$ (\textit{a}) and $\rho(x,t)=-\ln \frac{3^{0.6}-|x_1|^{0.6}-|x_2|^{0.6}}{|x_1|^{0.6}+| x_2|^{0.6}-1}$ (\textit{b}).}
\label{Fig4}
\end{center}
\end{figure}

\begin{figure}[h]
\center{\includegraphics[width=0.6\linewidth]{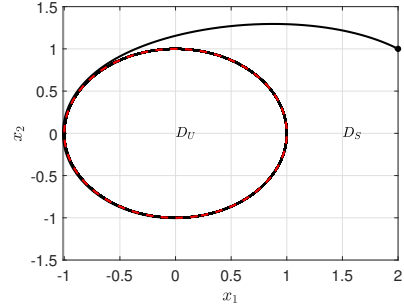}}
\caption{The phase trajectory of the system \eqref{eq_ex2_1} with density function $\rho(x,t)=x_1^2+x_2^2-1$.}
\label{Fig5}
\end{figure}

\begin{figure}[h]
\begin{center}
\begin{minipage}[h]{0.8\linewidth}
\center{\includegraphics[width=0.7\linewidth]{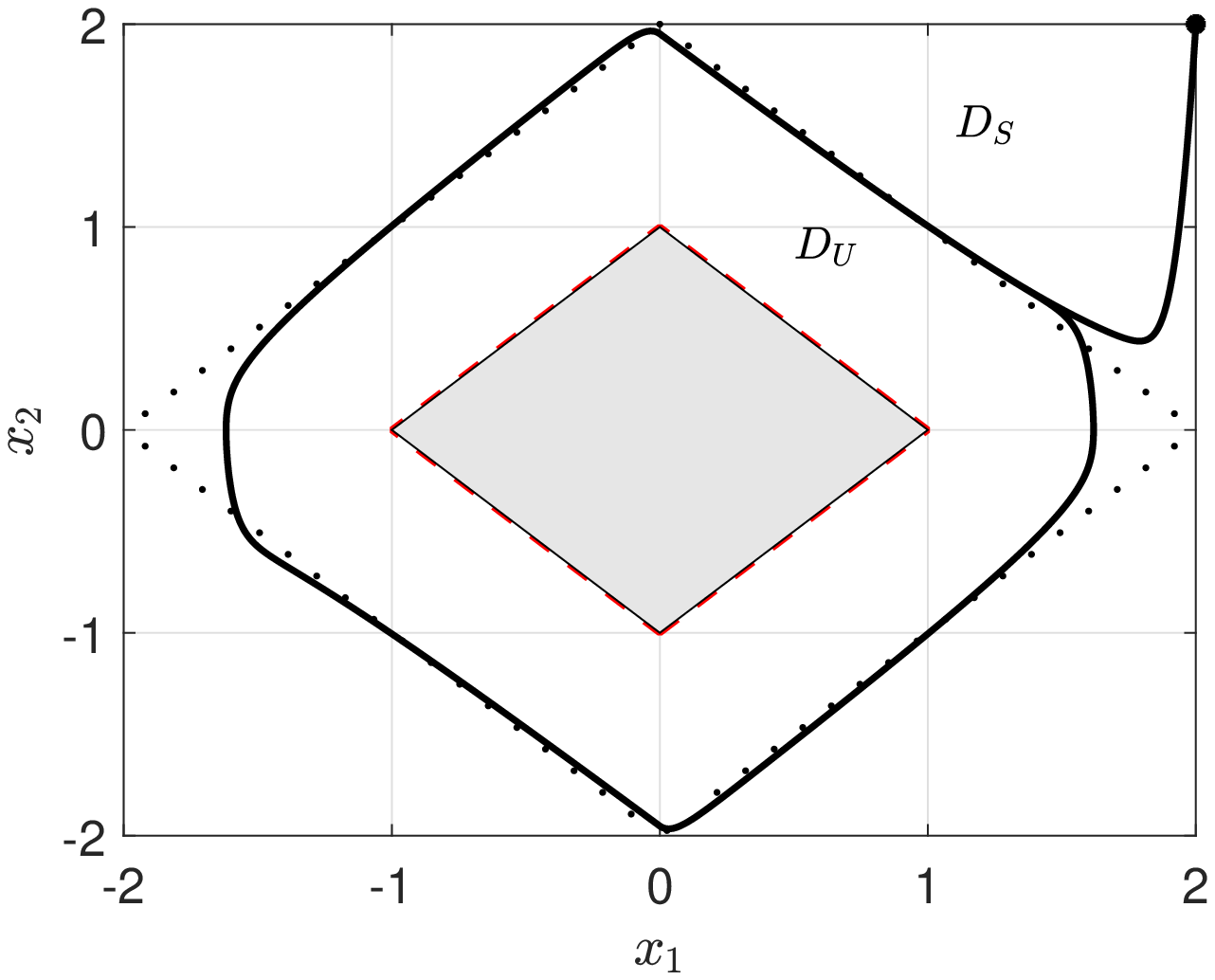}} \\ \textit{a}
\end{minipage}
\hfill
\begin{minipage}[h]{0.8\linewidth}
\center{\includegraphics[width=0.7\linewidth]{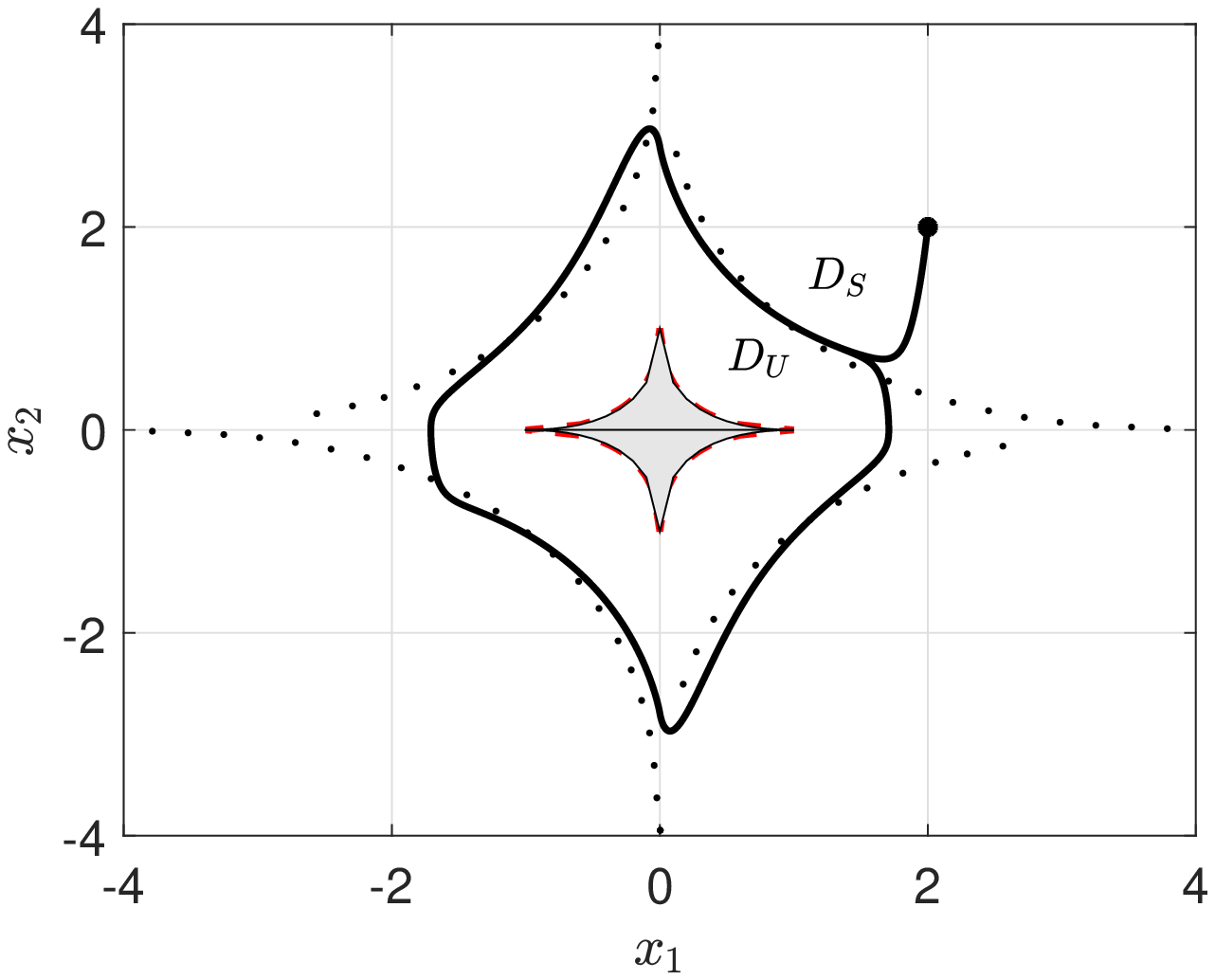}} \\ \textit{b}
\end{minipage}
\caption{The phase trajectories of the system \eqref{eq_ex2_1} with density functions $\rho(x,t)=20 \ln(|x_1|+|x_2|-1)$ (\textit{a}) and $\rho(x,t)=20 \ ln(|x_1|^{0.5}+|x_2|^{0.5}-1)$ (\textit{b}).}
\label{Fig6}
\end{center}
\end{figure}


Now we consider a dynamical system of the form
\begin{equation}
\label{eq1}
\begin{array}{l}
\dot{x}=f(x,t),
\end{array}
\end{equation}
where $x \in D \subset \mathbb R^n$ is the state,
the function $f: D \times [0, +\infty) \to \mathbb R^{n}$ is piecewise continuous in $t$ and locally Lipschitz in $x$ on $D \times [0, +\infty)$.

\begin{definition}
\label{def1}
The system \eqref{eq1} is called density with the density function $\rho(x,t): D \times [0, +\infty) \to \mathbb R$,
if there exists a continuously differentiable function $V(x,t): D \times [0, +\infty) \to \mathbb R$ such that
\begin{itemize}
\item[(a)] $w_1(x) \leq V(x,t) \leq w_2(x)$,
\item[(b)] $\dot{V} \leq \rho(x,t) W_1(x) \leq 0$ or $\dot{V} \geq \rho(x,t) W_2(x) \geq 0$
\end{itemize}
for any $t \geq 0$ and $x \in D$.
Here $w_1(x)$ and $w_2(x)$ are positive definite functions, $W_1(x)$ and $W_2(x)$ are locally Lipschitz functions in $D$.
\end{definition}

\begin{definition}
\label{def3}
If condition (b) of Definition \ref{def1} satisfies $\dot{V} \leq \rho(x,t) W_1(x) < 0$ or $\dot{V} \geq \rho(x,t ) W_2(x)>0$,
then the system \eqref{eq1} is called strictly density.
\end{definition}

\begin{definition}
\label{def2}
If $\dot{V} \leq \rho(x,t) W_1(x) \leq 0$ in $D_S \times [0, +\infty)$,
then the density function $\rho(x,t)$ and the domain $D_S$ is called stable.
If $\dot{V} \geq \rho(x,t) W_2(x) > 0$ in $D_U \times [0, +\infty)$,
then the density function $\rho(x,t)$ and the domain $D_U$ is called unstable.
\end{definition}

\begin{proposition}
\label{Th1}
Let the system \eqref{eq1} be strictly density in $D_S$ and $D_U$.
If for each $t$ the condition $V(x_{s},t)-V(x_{u},t)>0$ is satisfied,
where $x_{s} \in D_S$ and $x_{u} \in D_U$, then the trajectories are attracted to the boundary between the regions $D_S$ and $D_U$.
If the system \eqref{eq1} for each $t$ satisfies the condition $V(x_{s},t)-V(x_{u},t)<0$,
then the trajectories of the system move away from the boundary between the regions $D_S$ and $D_U$.
\end{proposition}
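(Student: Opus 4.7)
The plan is to read the geometry of trajectories straight from the sign of $\dot V$ supplied by Definition~\ref{def3}. Strict density yields
\begin{equation*}
\dot V \le \rho(x,t) W_1(x) < 0 \text{ on } D_S, \qquad \dot V \ge \rho(x,t) W_2(x) > 0 \text{ on } D_U,
\end{equation*}
so $V(x(t),t)$ is strictly decreasing along any trajectory while it remains in $D_S$, and strictly increasing along any trajectory while it remains in $D_U$. This monotonicity, combined with the positive-definite sandwich $w_1(x)\le V(x,t)\le w_2(x)$ from Definition~\ref{def1}, is the only substantive fact I will use.

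For the first claim, suppose $V(x_s,t)>V(x_u,t)$ for every $x_s\in D_S$, $x_u\in D_U$ and each $t$. By continuity of $V$ this forces
$$\inf_{x\in D_S} V(x,t)\;=\;\sup_{x\in D_U} V(x,t)\;=:\;V_b(t),$$
the common value attained on the shared boundary $\partial D_S\cap\partial D_U$. A trajectory starting in $D_S$ therefore satisfies $V(x(t),t)$ strictly decreasing and bounded below by $V_b(t)$, hence it cannot stay in any sublevel region $\{V\ge V_b(t)+\varepsilon\}$; I would then use the upper bound $V\le w_2(x)$ to convert this convergence in $V$ into convergence of $x(t)$ toward $\partial D_S\cap\partial D_U$. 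The argument for a trajectory starting in $D_U$ is symmetric: $V$ strictly increases but is bounded above by $V_b(t)$, so $x(t)$ is driven toward the same common boundary.

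For the second claim, the reversed hypothesis $V(x_s,t)<V(x_u,t)$ means $V_b(t)$ is now the supremum of $V$ on $D_S$ and the infimum on $D_U$. Then $\dot V<0$ in $D_S$ pushes $V$ below $V_b(t)$, i.e.\ into the interior of $D_S$ and away from the boundary, while $\dot V>0$ in $D_U$ pushes $V$ above $V_b(t)$, again into the interior and away from the boundary. The same $w_1,w_2$ sandwich translates the $V$-monotonicity into genuine motion in $x$.

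The main obstacle, as I see it, is the passage from monotonicity of $V$ along trajectories to geometric attraction (or repulsion) with respect to the boundary in the state space. The statement as written is somewhat informal about what ``attracted to the boundary'' means, so the technical work lies in (i) justifying that the level surface $\{V=V_b(t)\}$ is exactly the common boundary of $D_S$ and $D_U$ (this uses continuity of $V$ together with the blanket comparison $V(x_s,t)\gtrless V(x_u,t)$), and (ii) using the bounds $w_1(x)\le V\le w_2(x)$ to upgrade $V(x(t),t)\to V_b(t)$ into $\mathrm{dist}(x(t),\partial D_S\cap\partial D_U)\to 0$. Everything else is just bookkeeping of signs.
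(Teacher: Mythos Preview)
Your core argument---reading the sign of $\dot V$ from Definition~\ref{def3} so that $V$ strictly decreases in $D_S$ and strictly increases in $D_U$, and then letting the ordering $V(x_s,t)\gtrless V(x_u,t)$ decide whether trajectories move toward or away from the common boundary---is exactly the paper's proof. The paper stops there: it simply records $\dot V<0$ on $D_S$, $\dot V>0$ on $D_U$, and declares the boundary attracting (resp.\ repelling) without any further analysis.

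Where you go beyond the paper is in steps (i)--(ii), and there you should be careful. Step (i), that the common boundary coincides with the level set $\{V=V_b(t)\}$, is fine and is implicit in the paper's reading. Step (ii), however, does not follow from the sandwich $w_1(x)\le V(x,t)\le w_2(x)$: those bounds control only the norm $|x|$, not the distance of $x$ to an arbitrary level set of $V$, so $V(x(t),t)\to V_b(t)$ cannot in general be upgraded to $\mathrm{dist}(x(t),\partial D_S\cap\partial D_U)\to 0$. The paper does not attempt this upgrade at all; in fact Remark~\ref{Rem1}, immediately following the proposition, explicitly concedes that ``attraction'' here may mean merely that trajectories stay in some neighbourhood of the boundary, possibly one that does not shrink or even grows in time, depending on how the density behaves near the boundary. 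So your instinct that the passage from $V$-monotonicity to geometric attraction is the real obstacle is correct---it is simply left open in the paper rather than resolved.
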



\begin{proof}
Let the condition $V(x_{s},t)-V(x_{u},t)>0$ be satisfied for each $t \geq 0$, where $x_{s} \in D_S$ and $x_ {u} \in D_U$.
Since the system is strictly density, then, according to Definition \ref{def3}, the condition $\dot{V} \leq \rho(x,t) W_1(x) < 0$ is satisfied in $D_S $,
and the condition $\dot{V} \geq \rho(x,t) W_2(x) > 0$ is satisfied in $D_U$.
Hence, the boundary between $D_S$ and $D_U$ is a set to which the trajectories of the system are attracted.

Let the condition $V(x_{s},t)-V(x_{u},t)<0$ be satisfied for each $t \geq 0$, where $x_{s} \in D_S$ and $x_{u } \in D_U$.
According to Definition \ref{def3}, $\dot{V} \leq \rho(x,t) W_1(x) < 0$ is satisfied in $D_S$, and $\dot{V } \geq - \rho(x,t) W_2(x) > 0$.
This means that the separation boundary of $D_S$ and $D_U$ is the set that the trajectories of the system leave.
\end{proof}


\begin{remark}
\label{Rem1}
In Proposition \ref{Th1} and in its proof, under the attraction of trajectories to a set we may consider the cases when trajectories approaching a given set over time or finding trajectories in some neighborhood of a given set.
Moreover, the size of this neighborhood can remain the same or increase over time.
It depends on the density of space.
Let us consider some cases:
\begin{itemize}
\item if in the neighborhood of the boundary between the regions $D_S$ and $D_U$ the density value decreases to zero, then the trajectories of the system do not approach this boundary.
The trajectories can be located in the boundary vicinity or move away from boundary;
\item if in the neighborhood of the boundary between the regions $D_S$ and $D_U$ the value of density increases indefinitely, then the trajectories of the system will approach this boundary.
\end{itemize}
\end{remark}


To illustrate the conclusions in Proposition \ref{Th1} and Remark \ref{Rem1}, consider the following example.


\textit{Example 3.}
Consider again the system \eqref{eq_ex1_0}, where only $x \in \mathbb R_+$.

Choose the quadratic function $V=0.5x^2$.
Then $\dot{V} = - \rho(x,t) x^2$.
Hence $\dot{V}<0$ in $D_S=\{x,t \in \mathbb R_+: \rho(x,t)>0\}$ and $\dot{V}>0$ in the region $D_U=\{x,t \in \mathbb R_+: \rho(x,t)<0\}$.

According to Definition \ref{def3} the system \eqref{eq_ex1_0} is strictly dense for $x \in \mathbb R_+$.
Let us analyze Proposition \ref{Th1}.
We fix an arbitrary $t=t_1$.
Then $V(x_s(t_1))-V(x_u(t_1))>0$.
Obviously, this difference will be valid for any fixed $t$.
Hence, according to Proposition \ref{Th1}, the trajectories of the system will be attracted to the separation boundary of $D_S$ and $D_U$.

Let the density function be given as $\rho(x,t)=x-w$, $w(t)=e^t$.
In this case $\lim\limits_{t \to \infty} (w(t)-x(t)) = \textup{const}$ (see Fig.~\ref{Fig_Alex1}, \textit{a}).
If $w(t)=e^{e^t}$, then the difference between $w(t)$ and $x(t)$ increases with time (see Fig.~\ref{Fig_Alex1}, \textit{b}).
This is due to the fact that $w(t)$ is an unbounded function, and the dense value of $|\rho(x,t)|$ decreases as $x$ approaches $w$.
As a result, $x(t)$ tries to get closer to $w(t)$, but does not succeed because of the low density of space in the neighborhood of $w(t)$ and the high growth rate of $w(t)$.

\begin{figure}[h]
\begin{center}
\begin{minipage}[h]{0.8\linewidth}
\center{\includegraphics[width=0.7\linewidth]{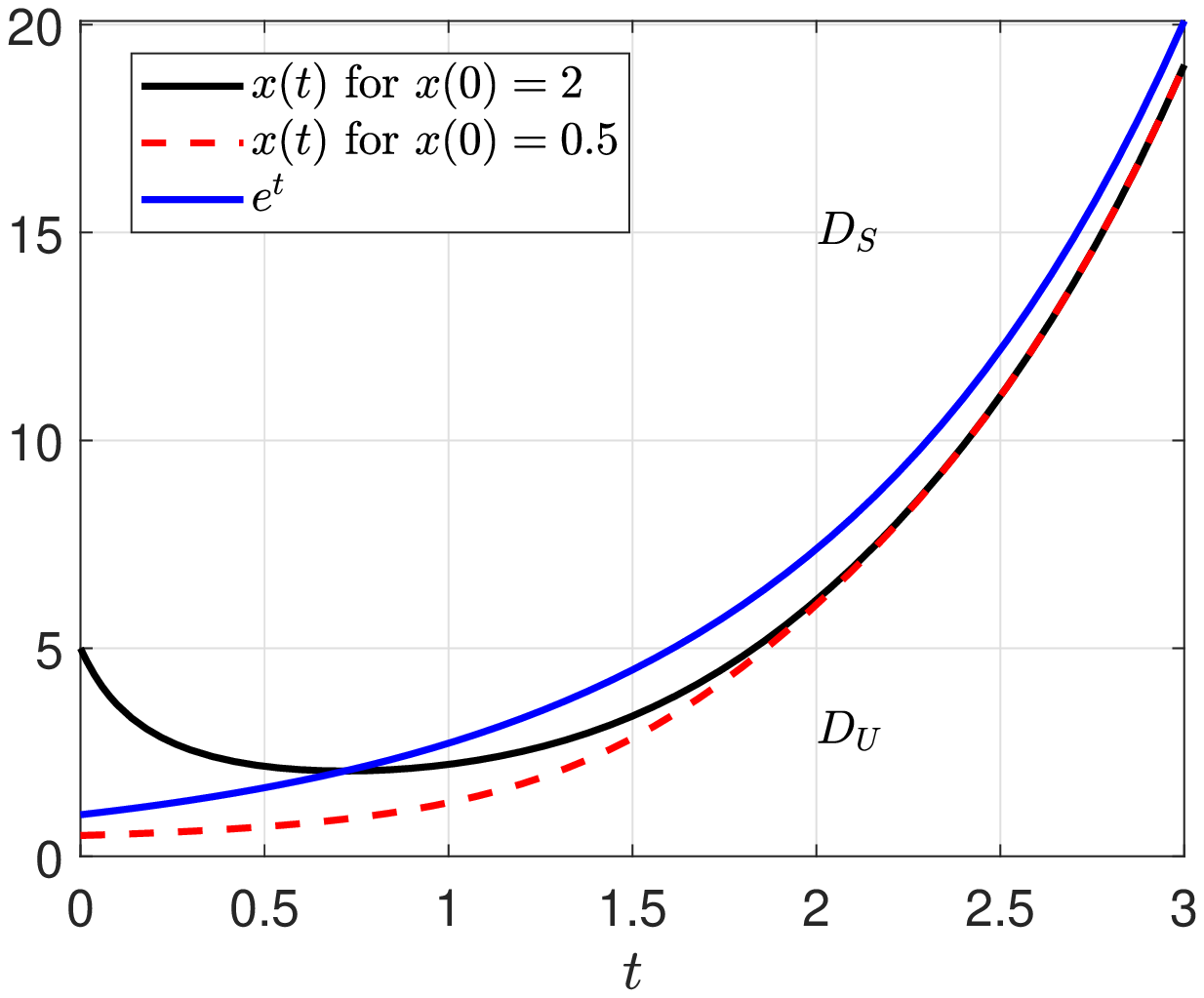}} \\ \textit{a}
\end{minipage}
\vfill
\begin{minipage}[h]{0.8\linewidth}
\center{\includegraphics[width=0.7\linewidth]{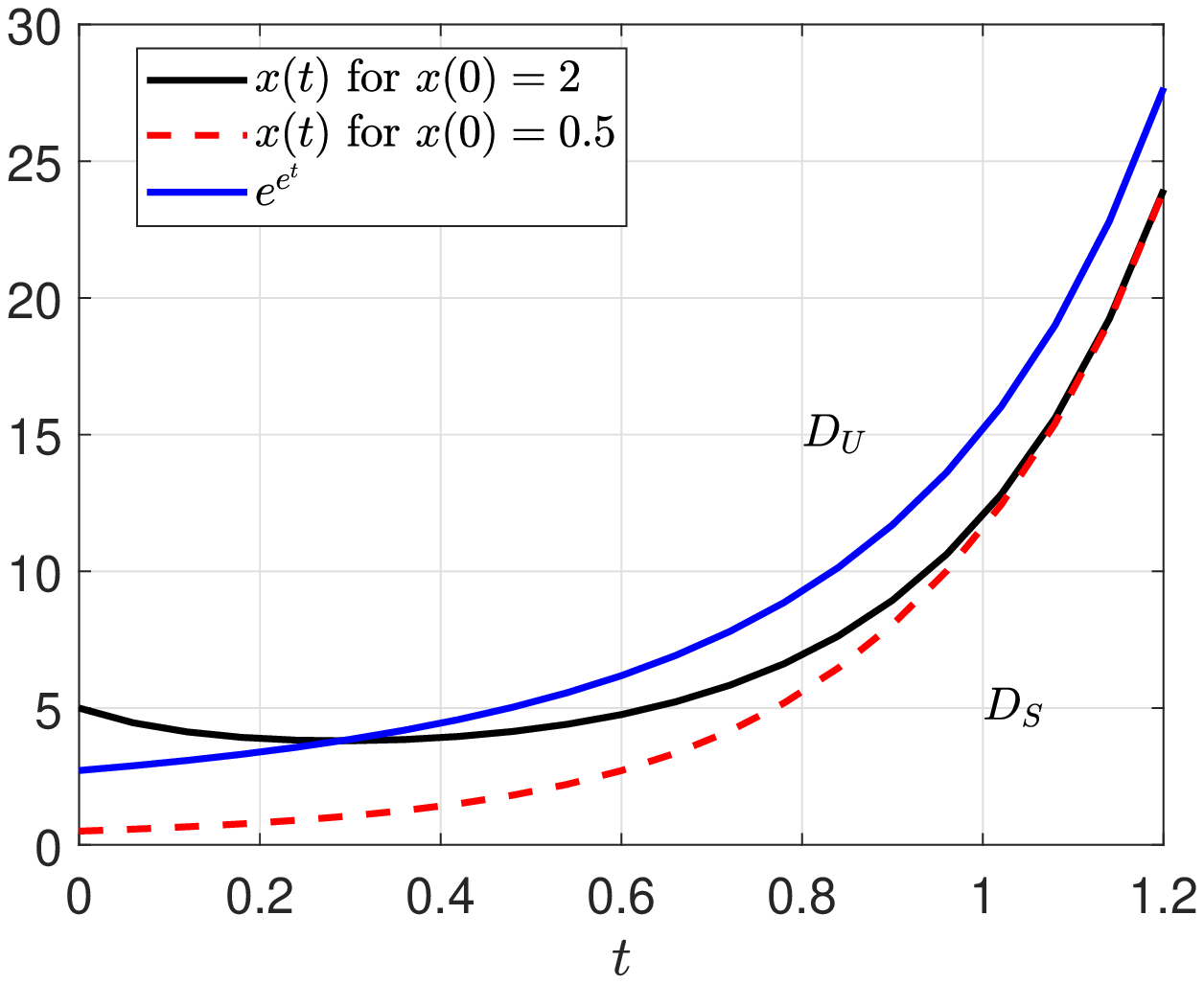}} \\ \textit{b}
\end{minipage}
\caption{Tracking $x(t)$ for unbounded signals $w(t)=e^t$ (\textit{a}) and $w(t)=e^{e^t}$ (\textit{b}).}
\label{Fig_Alex1}
\end{center}
\end{figure}

Let the space density be given as $\rho=w \textup{sign}(x-w)$, $w(t)=e^t$ or $w(t)=e^{e^t}$.
In this case, the space density in the neighborhood of $w(t)$ increases with $w(t)$, which guarantees that $x$ approaches $w$ as $t \to \infty$ (see Fig.~\ref{Fig_Alex2}).

\begin{figure}[h]
\begin{center}
\begin{minipage}[h]{0.8\linewidth}
\center{\includegraphics[width=0.7\linewidth]{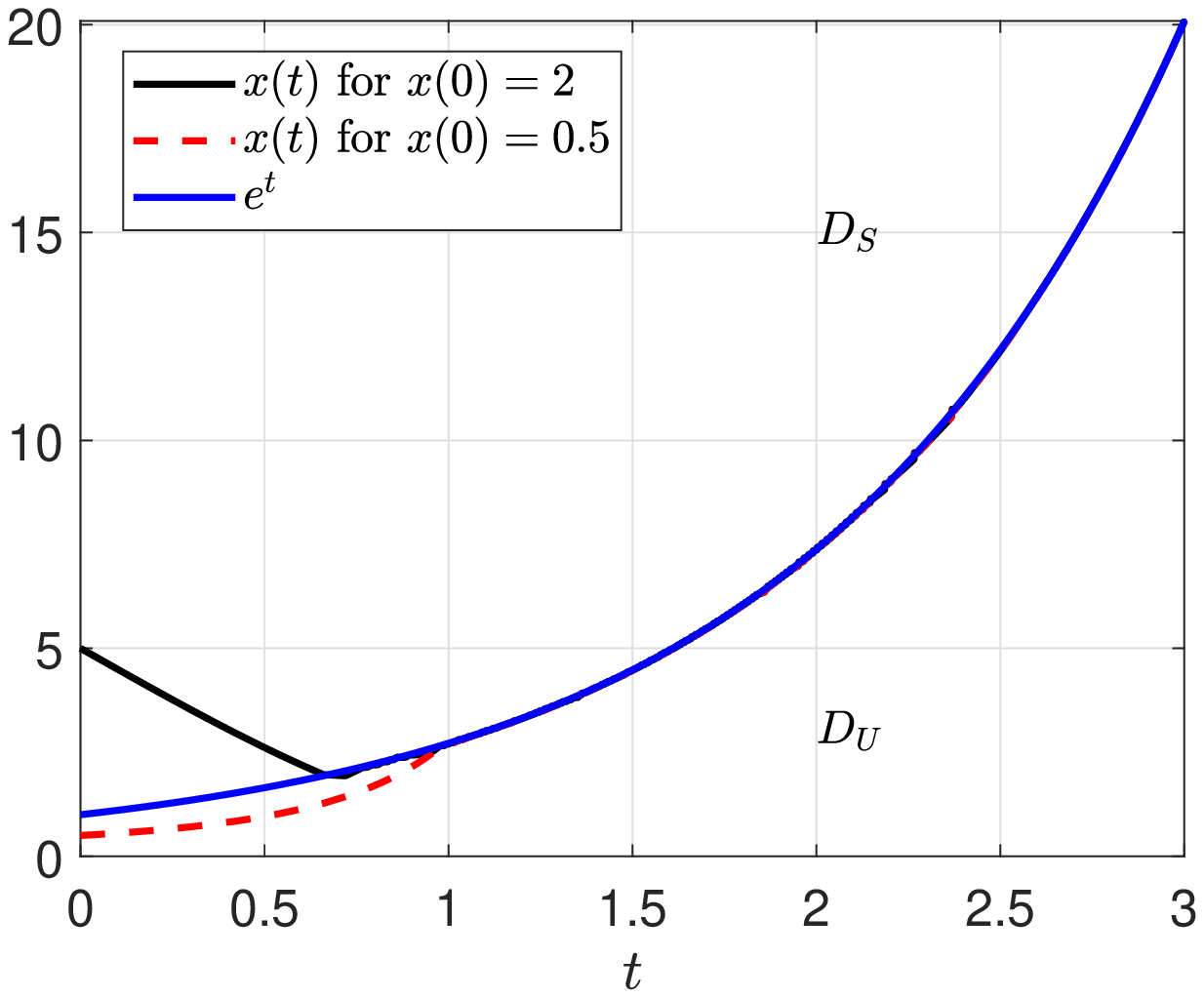}} \\ \textit{a}
\end{minipage}
\vfill
\begin{minipage}[h]{0.8\linewidth}
\center{\includegraphics[width=0.7\linewidth]{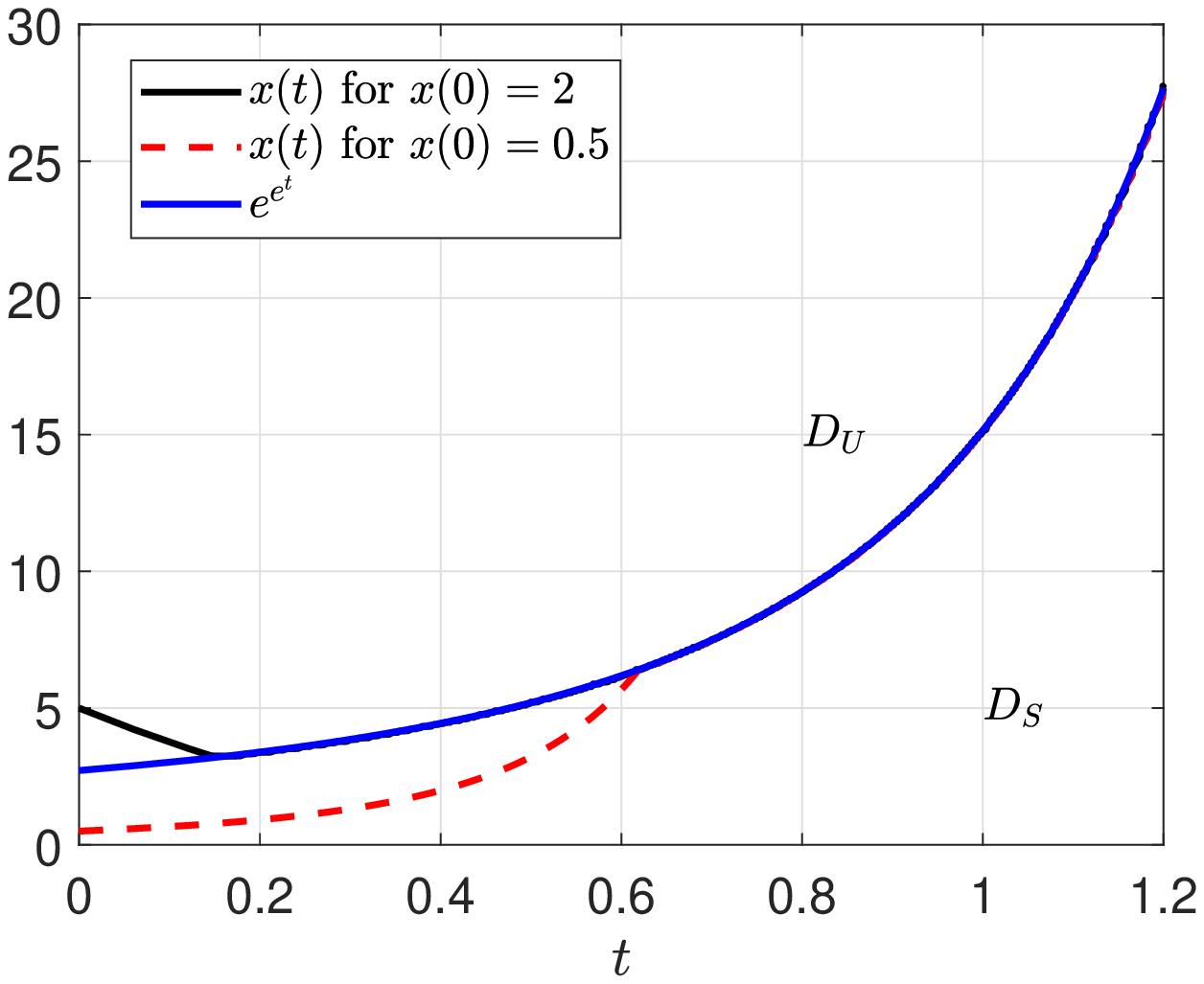}} \\ \textit{b}
\end{minipage}
\caption{Tracking $x(t)$ for unbounded signals $w(t)=e^t$ (\textit{a}) and $w(t)=e^{e^t}$ (\textit{b}).}
\label{Fig_Alex2}
\end{center}
\end{figure}


Also, when studying density systems, we will single out special areas.
We have already considered them earlier as gray areas in the figures. 
Now we will define them.

\begin{definition}
\label{def4}
If $\dot{V} \leq \rho(x,t) W_1(x) < 0$ in a neighborhood of $D_{bh} \times [0, +\infty)$,
there are no solutions in $D_{bh} \times [0, +\infty)$ \eqref{eq1}
and the density value increases to infinity approaching this area,
then the domain $D_{bh}$ is called absolutely stable.
\end{definition}

\begin{definition}
\label{def5}
If $\dot{V} \geq \rho(x,t) W_2(x) > 0$ in a neighborhood of $D_{wh} \times [0, +\infty)$,
there are no solutions in $D_{wh} \times [0, +\infty)$ \eqref{eq1}
and the density value increases to infinity approaching this area,
then the domain $D_{wh}$ is called absolutely unstable.
\end{definition}

\textit{Example 4.}
Consider the system \eqref{eq_ex2_1}, where $\rho_1(x,t)=\rho_2(x,t)=\rho(x)$.
Choose the quadratic function \eqref{eq_ex2_2}.
Then $\dot{V} = - \rho(x) (x_1^2+x_2^2)$.

If $\rho(x)=e^{(x_1^2+x_2^2-1)^{-0.98}}$, then all trajectories tend to $D_{bh}=\{x \in \mathbb R ^2: x_1^2+x_2^2 \leq 1\}$ (see Fig.~\ref{Fig_bh}, \textit{a}), that is the trajectories of the system will be attracted to the region $D_{bh}$ from any initial conditions. 
The density function value increases to infinity at its boundary.
If $\rho(x)=-\ln(x_1^2+x_2^2-1)$, then trajectories with initial conditions from $x_1^2+x_2^2 \geq \sqrt{2}$ remain in this region, 
but all trajectories with initial conditions from the region $1 \leq x_1^2+x_2^2 \leq \sqrt{2}$ cannot leave this region and will be attracted to the region $D_{bh}=\{x \in \mathbb R^2: x_1^2+x_2^2 \leq 1\}$ (see Fig.~\ref{Fig_bh}, \textit{b}), whose density increases to infinity at its boundary.

\begin{figure}[h]
\begin{center}
\begin{minipage}[h]{0.8\linewidth}
\center{\includegraphics[width=0.7\linewidth]{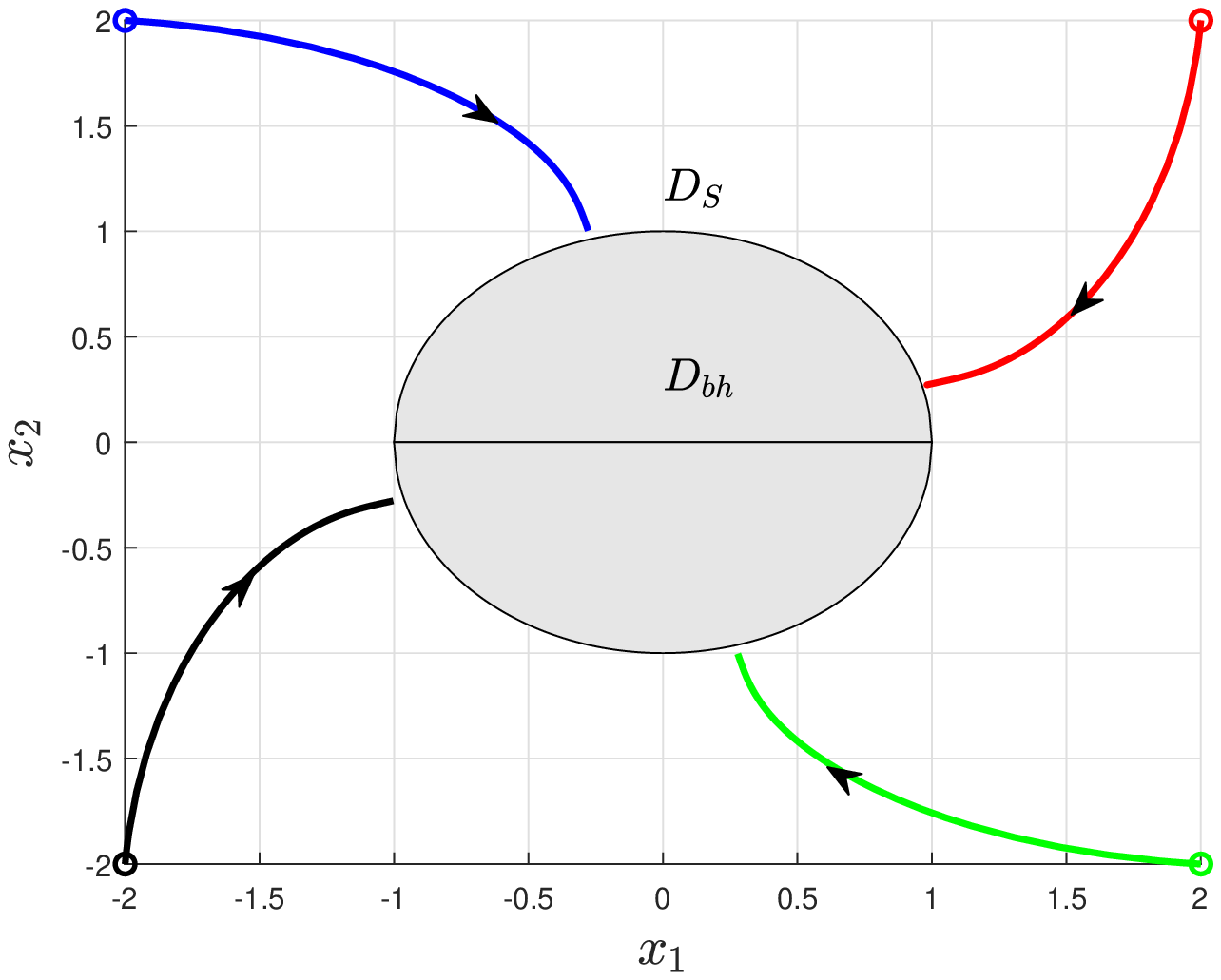}} \\ \textit{a}
\end{minipage}
\vfill
\begin{minipage}[h]{0.8\linewidth}
\center{\includegraphics[width=0.7\linewidth]{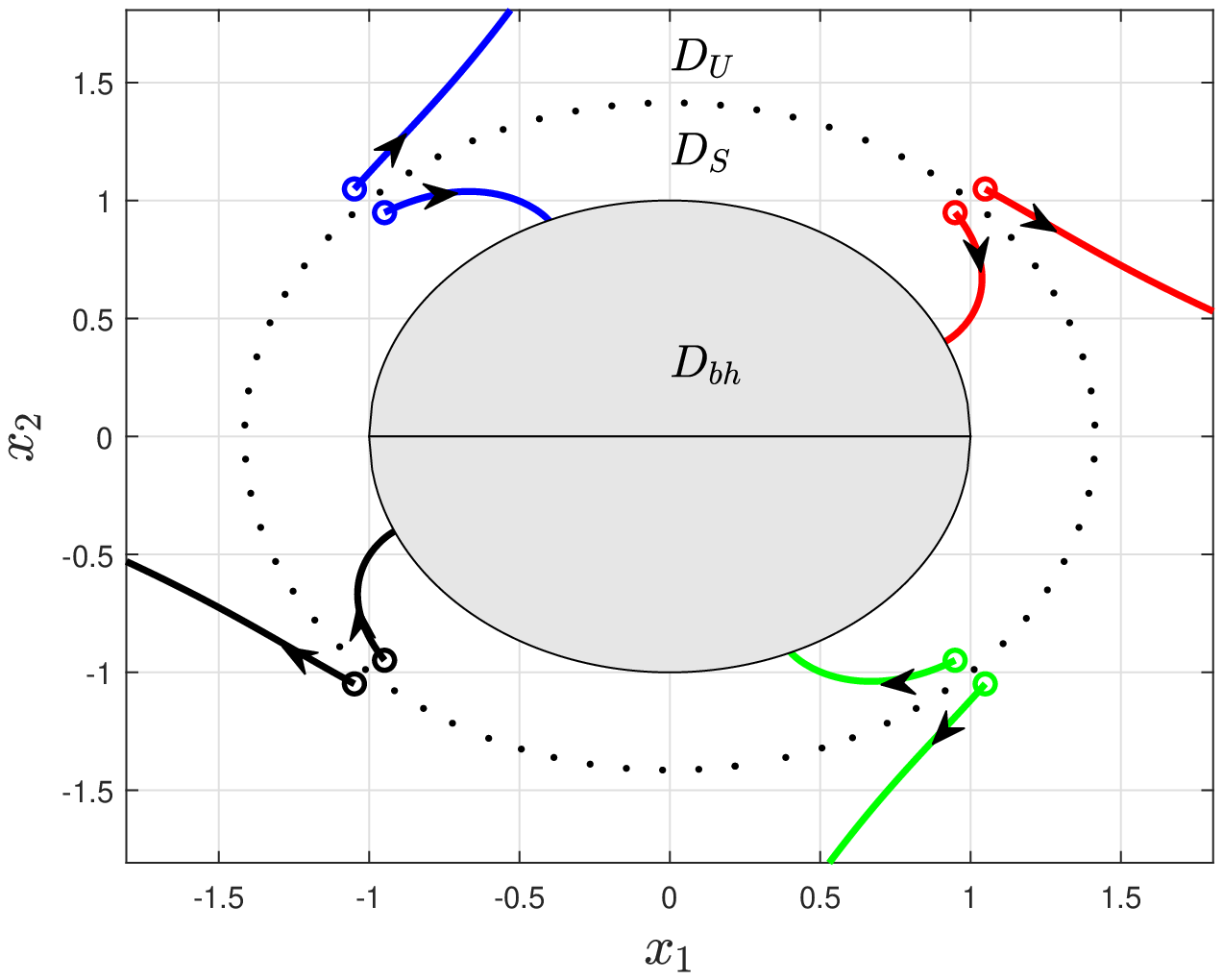}} \\ \textit{b}
\end{minipage}
\caption{The phase portrait of the system \eqref{eq_ex2_1} with density functions $\rho(x)=e^{(x_1^2+x_2^2-1)^{-0.98}}$ (\textit{a}) and $\rho(x)=-\ln(x_1^2+x_2^2-1)$ (\textit{b}).}
\label{Fig_bh}
\end{center}
\end{figure}

If $\rho(x)=-e^{(x_1^2+x_2^2-1)^{-0.98}}$, then all trajectories are moved away from the region $D_{wh}=\{x \in \mathbb R^2: x_1^2+x_2^2 \leq 1\}$ (see Fig.~\ref{Fig_wh}, \textit{a}) and will never be able to approach the border of this region, where the density is increased to infinity.
If $\rho(x)=\ln(x_1^2+x_2^2-1)$, then trajectories with initial conditions from $x_1^2+x_2^2 \geq \sqrt{2}$ remain in this region, but all trajectories are moved away from the region $D_{wh}=\{x \in \mathbb R^2: x_1^2 +x_2^2 \leq 1\}$ if initial conditions belong to $1 \leq x_1^2+x_2^2 \leq \sqrt{2}$ (see Fig.~\ref{Fig_wh}, \textit{b}).

\begin{figure}[h]
\begin{center}
\begin{minipage}[h]{0.8\linewidth}
\center{\includegraphics[width=0.7\linewidth]{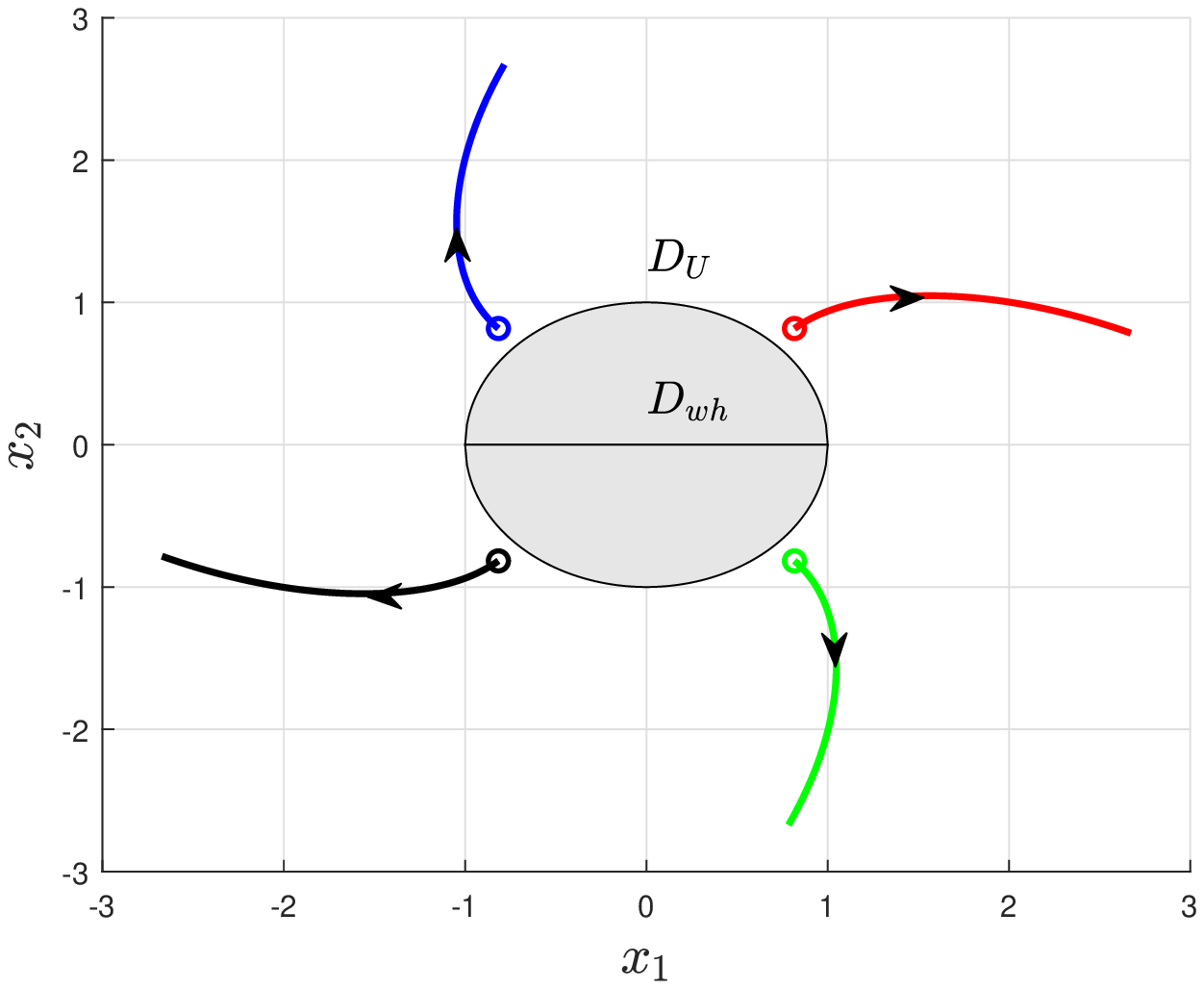}} \\ \textit{a}
\end{minipage}
\vfill
\begin{minipage}[h]{0.8\linewidth}
\center{\includegraphics[width=0.7\linewidth]{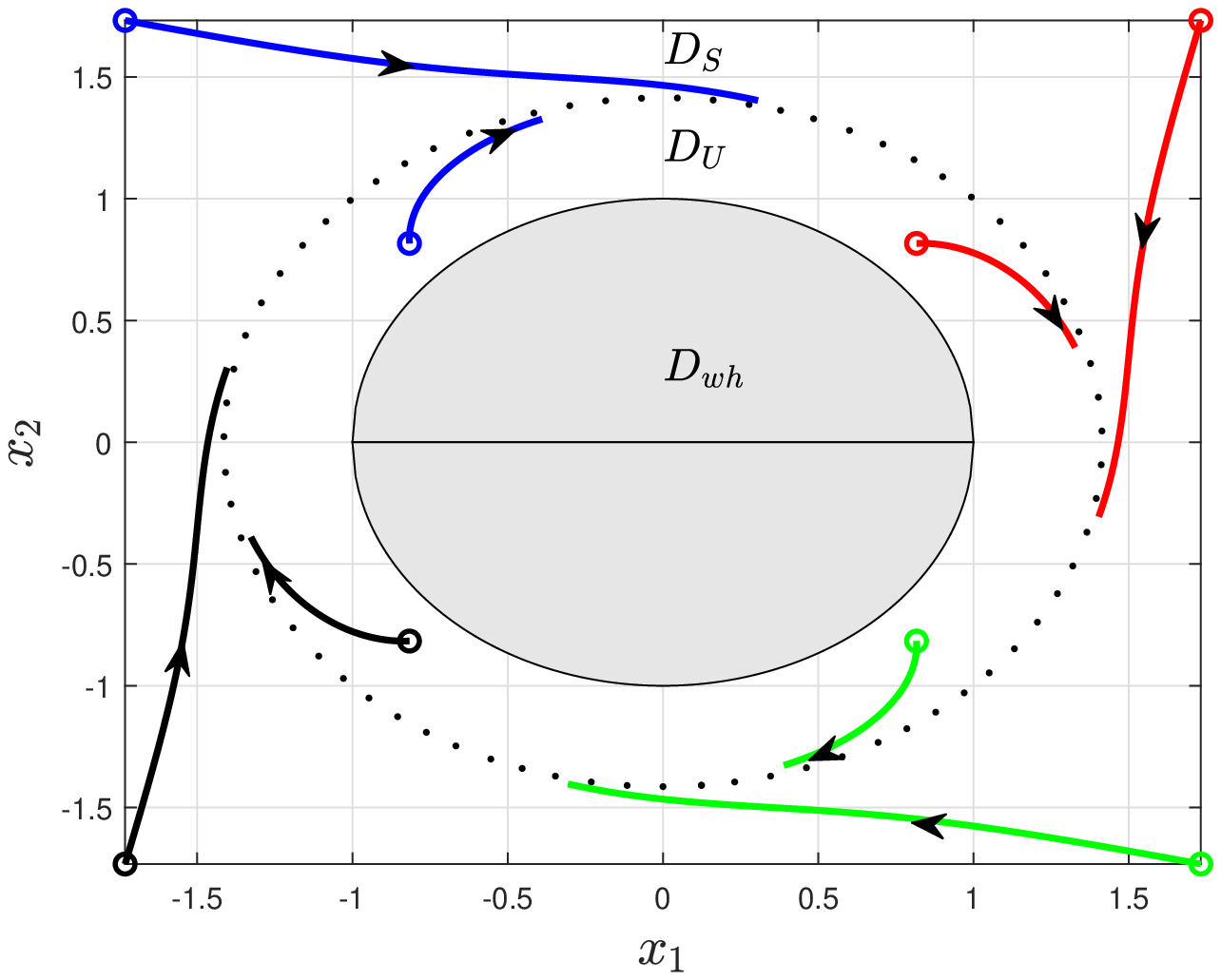}} \\ \textit{b}
\end{minipage}
\caption{The phase portrait of the system \eqref{eq_ex2_1} with density functions $\rho(x)=-e^{(x_1^2+x_2^2-1)^{-0.98}}$ (\textit{a}) and $\rho(x)=\ln(x_1^2+x_2^2-1)$ (\textit{b}).}
\label{Fig_wh}
\end{center}
\end{figure}


\begin{remark}
Let us explain the physical meaning of the density systems.
If the density function is explicitly presented on the right-hand side of the system equation, for example, in the form $\dot{x}=\rho(x)f(x)$, then the space density value directly affects the phase flow.
Thus, if $\rho(x)=1$, then we have an initial system of the form $\dot{x}=f(x)$.
If $\rho(x)>0$, then the presence of the density function does not qualitatively affect the equilibrium positions and their types, but quantitatively affects the phase portrait.
At $0<\rho(x)<1$ the value of the phase vector flow is decreased because the space density is decreased.
When $\rho(x)>1$ the value of the phase vector flow is increased due to the space density is increased.
When the sign of the density function changes, the phase portrait changes qualitatively.

Condition (b) of Definition \ref{def1} can be interpreted as the rate of change of the phase volume, given by the function $V(x,t)$ and taking into account the density of space.

Here are models of real systems:
\begin{itemize}

\item consider the pendulum equations $\dot{x}_1=x_2$, $\dot{x}_2=-\frac{g}{l} \sin x_1 - \frac{k}{m}x_2$, 
where $x_1 $ is an angle of pendulum deviation from the vertical axis, 
$x_2$ is the angular velocity of the pendulum, 
$g$ is the gravity acceleration, 
$l$ is the pendulum length, 
$k$ is the friction coefficient \cite{Khalil02}.
Choosing the quadratic function in the form of the total energy as $V=\frac{g}{l} (1-\cos x_1)+0.5x_2^2$, we obtain $\dot{V}=-\frac{k}{m} x_2^2$.
If we consider the density function $\rho(x,t)=k$, then in the absence of friction ($\rho(x,t)=0$) we have undamped oscillations, and damped oscillations in the presence of friction ($\rho(x,t ) \neq 0$);

\item in \cite{Arnold98} different types of breeding patterns can be written as $\dot{x}=\rho(x)x$, where $x$ is the size of the biological population.
For $\rho(x)=k>0$ we have a normal reproduction model, for $\rho(x)=kx$ we have an explosion model, for $\rho(x)=1-x$ we have a logistic curve model;

\item absolutely stable and absolutely unstable regions from Definitions \ref{def4} and \ref{def5} can be found as the simplest models of black and white holes, respectively, see \cite{Carroll04}.
\end{itemize}
\end{remark}


\section{\uppercase{Density control}}
\label{Sec3}

In this section, we consider several examples of design of the control laws in order to obtain the closed-loop systems that are described by density systems.

\subsection{Plants with known parameters}

Consider the system
\begin{equation}
\label{eq1_simple}
\begin{array}{l}
Q(p)y(t)=kR(p)u(t),
\end{array}
\end{equation}
where $y \in \mathbb R$ is the output,
$u \in \mathbb R$ is the control,
$Q(p)$ and $R(p)$ are linear differential operators with constant known coefficients,
$R(\lambda)$ is Hurwitz polynomial, 
$\lambda$ is a complex variable, 
$k>0$ is the high-frequency gain.

If the relative degree of \eqref{eq1_simple} is equal to $1$ (i.e., $\deg Q(p) - \deg R(p) = 1$), then the control law
\begin{equation}
\label{eq2_simple}
\begin{array}{l}
u(t)=-\frac{Q(p)}{kpR(p)}\rho(y,t)y(t)
\end{array}
\end{equation}
gets the system \eqref{eq1_simple} to the form
\begin{equation}
\label{eq3_simple}
\begin{array}{l}
\dot{y}(t)=-\rho(y,t)y(t),
\end{array}
\end{equation}
which has the structure of a density system.
In particular, examples of specifying the density function $\rho(y,t)$ are considered in Example 1.

If the relative degree of \eqref{eq1_simple} is greater than $1$ (i.e., $\deg Q(p) - \deg R(p) = \gamma > 1$), then the control law
\begin{equation}
\label{eq4_simple}
\begin{array}{l}
u(t)=-\frac{Q(p)}{kp R(p) (\mu p+1)^{\gamma-1}}\rho(y,t)y(t)
\end{array}
\end{equation}
where $\mu>0$ is a sufficiently small number, leads to the closed-loop system of the form
\begin{equation}
\label{eq5_simple}
\begin{array}{l}
\dot{y}(t)=\frac{1}{(\mu p+1)^{\gamma-1}}\rho(y,t)y(t).
\end{array}
\end{equation}
For $\mu = 0$ the system \eqref{eq5_simple} has the structure of a density system \eqref{eq3_simple}.
Let the density function $\rho(y,t)$ be chosen such that the solutions of the density system \eqref{eq5_simple} are asymptotically stable at $\mu = 0$. 
Then, according to \cite{Khalil02,Vasilyeva73}, there exists a sufficiently small $\overline{ \mu}>0$ such that for $\mu<\overline{\mu}$ the solutions of the system \eqref{eq5_simple} for $0<\mu<\overline{\mu}$ are sufficiently close to the solution of \eqref{eq5_simple} for $\mu=0$.


\subsection{Plants with unknown parameters}

Consider the system \eqref{eq1_simple} with unknown parameters of the operators $Q(p)$ and $R(p)$, as well as unknown value of $k>0$.
Let the relative degree of the system be equal to $1$.
All the obtained results can be extended to systems with a relative degree greater than $1$, for example, by using additional methods \cite{Fradkov99,Annaswamy21}.
In this paper, we consider only systems with a relative degree of $1$ in order to avoid cumbersome derivations on overcoming the problem of a high relative degree.

Let us rewrite the operators $Q(p)$ and $R(p)$ as $Q(p)=Q_m(p)+\Delta Q(p)$ and $R(p)=R_m(p)+\Delta R( p)$,
where $Q_m(\lambda)$ and $R_m(\lambda)$ are arbitrary Hurwitz polynomials of orders $n$ and $n-1$, respectively,
the orders of $\Delta Q(p)$ and $\Delta R(p)$ are $n-1$ and $n-2$, respectively.
Choosing $Q_m(\lambda)/R_m(\lambda)=\lambda$ and extracting the integer part in
$\frac{\Delta Q(\lambda)}{Q_m(\lambda)}=k_{0y}+\frac{\Delta \tilde{Q}(\lambda)}{R_m(\lambda)}$, rewrite \eqref{eq1_simple} as follows
\begin{equation}
\label{eq0_adapt}
\begin{array}{l}
\dot{y}(t)=k\left( u(t)+\frac{\Delta R(p)}{R_m(p)}u-\frac{\Delta \tilde{ Q}(p)}{R_m(p)}y-k_{0y}y \right).
\end{array}
\end{equation}

Let $c_0=col\{c_{0y},c_{0u},k_{0y}\}$ be the vector of unknown parameters,
$\Delta \tilde{Q}(p)=c_{0y}^{\rm T}[1~ p~ ... ~p^{n-2}]$ and $\Delta R(p)= c_{0u}^{\rm T}[1~ p~ ... ~p^{n-2}]$.
Also, consider the regression vector $w=col\{V_y,V_u,y\}$ and filters
\begin{equation}
\label{eq1_adapt_filt}
\begin{array}{l}
\dot{V}_y=FV_y+by,
\\
\dot{V}_u=FV_u+bu.
\end{array}
\end{equation}
Here $F \in \mathbb R^{(n-1) \times (n-1)}$ is Frobenius matrix with characteristic polynomial $R_m(\lambda)$,
$b=col\{0,...,0,1\}$.

Taking into account the introduced notation, the equation \eqref{eq0_adapt} can be rewritten in the form
\begin{equation}
\label{eq1_adapt}
\begin{array}{l}
\dot{y}(t)=k(u(t)-c_0)^{\rm T}w(t).
\end{array}
\end{equation}

Introduce the control law
\begin{equation}
\label{eq2_adapt}
\begin{array}{l}
u(t) = c^{\rm T}(t)w(t) + \rho(y,t).
\end{array}
\end{equation}

Substituting \eqref{eq2_adapt} into \eqref{eq1_adapt}, we get the closed-loop system
\begin{equation}
\label{eq3_adapt}
\begin{array}{l}
\dot{y}(t)=\rho(y,t)+k[c(t)-c_0^{\rm T}w(t)].
\end{array}
\end{equation}

\begin{theorem}
\label{th_adapt}
The control law \eqref{eq2_adapt} together with the adaptation algorithm
\begin{equation}
\label{eq6_adapt}
\begin{array}{l}
\dot{c} = -\gamma y w
\end{array}
\end{equation}
where $\gamma>0$, transforms the system \eqref{eq1_adapt} to the density type.
If for $t \to \infty$ we have a stable density function $\rho(y,t)$ with a ultimately stable set in a neighbourhood of zero,
then the condition $\lim\limits_{t \to \infty} y(t) = 0$ holds and all signals in the closed-loop system \eqref{eq3_adapt} with \eqref{eq1_simple}, \eqref{eq1_adapt_filt}, \eqref{eq2_adapt}, and \eqref{eq6_adapt}  are ultimately bounded.
\end{theorem}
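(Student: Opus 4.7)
My plan is a composite-Lyapunov / adaptive-control argument, where the density structure enters only at the very last step through the sign properties of $\rho$.

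First I would introduce the augmented quadratic function
\begin{equation*}
V(y,c,t) \;=\; \tfrac{1}{2} y^2 \;+\; \tfrac{k}{2\gamma}\,(c-c_0)^{\rm T}(c-c_0),
\end{equation*}
which is positive definite in the extended state $(y,c)$ (recall $k>0$ and $c_0$ is the unknown constant true-parameter vector). This is the standard choice that turns the unmeasurable parameter error into a Lyapunov state while preserving $V \geq \tfrac{1}{2}y^2$.

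Second, I would differentiate $V$ along \eqref{eq3_adapt} together with \eqref{eq6_adapt}. The product rule gives
\begin{equation*}
\dot V \;=\; y\bigl[\rho(y,t)+k(c-c_0)^{\rm T}w\bigr] \;+\; \tfrac{k}{\gamma}(c-c_0)^{\rm T}\dot c,
\end{equation*}
and substituting $\dot c=-\gamma y w$ cancels the parameter-error term exactly, leaving $\dot V = y\,\rho(y,t)$. This is the key algebraic step on which the whole theorem rests; because $\rho(y,t)$ is assumed to be a stable density function in a neighbourhood of $y=0$, the product $y\,\rho(y,t)$ admits a bound of the form $\rho(y,t)W_1(y)\leq 0$ in the sense of Definition \ref{def1}(b), so the closed loop in the coordinates $(y,c)$ is itself a density system with the augmented Lyapunov function $V$.

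Third, from $\dot V\leq 0$ I deduce that $V$ is non-increasing, hence $y(t)$ and $c(t)-c_0$ are bounded for all $t\ge 0$. Boundedness of $y$ and $u$ then propagates to the filter states $V_y,V_u$ through \eqref{eq1_adapt_filt} because $F$ is Hurwitz (its characteristic polynomial is $R_m(\lambda)$); hence the regressor $w$ is bounded, which together with boundedness of $c$ yields boundedness of $u$ in \eqref{eq2_adapt} and therefore of every signal in the closed loop. Finally, uniform continuity of $\dot V = y\rho(y,t)$ follows from boundedness of all state variables and the local Lipschitz/piecewise-continuous regularity assumed on $\rho$; Barbalat's lemma then forces $y\rho(y,t)\to 0$, and the assumption that the stable set of $\rho$ is ultimately a neighbourhood of zero identifies $y(t)\to 0$ as $t\to\infty$.

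The main obstacle is the last step: the cancellation producing $\dot V = y\rho(y,t)$ only controls $V$ \emph{monotonically}, not strictly, so one must argue carefully that the density assumption on $\rho$ really does translate into a semi-definite bound $y\rho(y,t)\le W_1(y)\le 0$ compatible with Definition \ref{def1}, and that the corresponding $W_1$ is strong enough (together with uniform continuity) to invoke Barbalat and exclude persistent excursions of $y$ away from zero. Once this identification is made, asymptotic convergence of $y$ and ultimate boundedness of all signals follow by the standard adaptive-control argument.
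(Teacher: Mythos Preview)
Your choice of Lyapunov function and the computation yielding $\dot V = y\,\rho(y,t)$ match the paper exactly, and your invocation of Barbalat's lemma to extract $y\to 0$ is, if anything, more careful than the paper's terse assertion that $\rho(y,t)y<0$ already forces $\lim_{t\to\infty}y(t)=0$.

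There is, however, a genuine gap in your boundedness chain. You write: ``Boundedness of $y$ and $u$ then propagates to the filter states $V_y,V_u$ \ldots\ hence the regressor $w$ is bounded, which together with boundedness of $c$ yields boundedness of $u$.'' This is circular: at that point $u$ has \emph{not} been shown bounded. Since $u=c^{\rm T}w+\rho(y,t)$ depends on $V_u$ through $w$, you cannot use boundedness of $u$ as an input to the $V_u$-filter. The fact that $F$ is Hurwitz is insufficient, because substituting $u$ back into $\dot V_u=FV_u+bu$ feeds $V_u$ into itself with the time-varying gain $c_u(t)$, and $F+b\,c_u(t)^{\rm T}$ need not be stable merely because $c_u(t)$ is bounded.

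The paper breaks this loop by a structural step you are missing: split $c=c_0+(c-c_0)$ and substitute the control law into the $V_u$-filter to obtain
\[
\dot V_u=(F+bc_{0u}^{\rm T})V_u + b\bigl[c_{0y}^{\rm T}V_y+k_{0y}y+(c-c_0)^{\rm T}w+\rho(y,t)\bigr].
\]
The constant matrix $F+bc_{0u}^{\rm T}$ has characteristic polynomial $R(\lambda)$, which is Hurwitz by the standing minimum-phase assumption on the plant; the bracket is bounded because $V_y$, $y$, $\rho(y,t)$ are already known bounded and $(c-c_0)^{\rm T}w\to 0$ follows from \eqref{eq3_adapt} once $y\to 0$. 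Thus the paper establishes $y\to 0$ \emph{first} and only then proves $V_u$ (hence $w$, hence $u$, hence $\dot c\to 0$) bounded. Your ordering---all signals bounded before Barbalat---cannot be closed without invoking the Hurwitz property of $R(\lambda)$ in this way.
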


\begin{remark}
The proposed control law \eqref{eq2_adapt} consists of the classical part $c^{\rm T}(t)w(t)$, the classical adaptation algorithm \eqref{eq6_adapt} (see, i.g. \cite{Fradkov99}), and a new component $\rho(y,t)$ in \eqref{eq2_adapt} that determines the density of space. 
However, as will be shown in the examples below, a new control law will allow achieving new control goals in comparison with \cite{Fradkov99,Annaswamy21}.
\end{remark}

\begin{proof}
To analyse the stability of the closed-loop system, choose Lyapunov function in the form
\begin{equation}
\label{eq4_adapt}
\begin{array}{l}
V=\frac{1}{2}y^2 + \frac{k}{2\gamma}(c(t)-c_0)^{\rm T}(c(t)-c_0).
\end{array}
\end{equation}

Let us find the full time derivative of \eqref{eq4_adapt} along the solutions \eqref{eq3_adapt}, \eqref{eq6_adapt} and rewrite the result as follows
\begin{equation}
\label{eq7_adapt}
\begin{array}{l}
\dot{V}=\rho(y,t) y.
\end{array}
\end{equation}
As a result, we got a density system.

If for $t \to \infty$ we have a stable density function $\rho(y,t)$ with a ultimately stable set in a neighbourhood of zero,
then $\rho(y,t)$ is chosen such that $\rho(y,t)y<0$.
Therefore, we have $\lim\limits_{t \to \infty} y(t) = 0$.
Expression \eqref{eq3_adapt} implies that $\lim\limits_{t \to \infty} (c(t)-c_0)^{\rm T}(t) w(t) = 0$.
The boundedness of $V_y(t)$ follows from the first equation \eqref{eq1_adapt_filt}, the boundedness of $y(t)$, and Hurwitz matrix $F$.
Putting \eqref{eq2_adapt} into the second equation \eqref{eq1_adapt_filt}, we get
\begin{equation}
\label{eq8_adapt_filt_proof}
\begin{array}{lll}
\dot{V}_u & =FV_u+bc_0^{\rm T}w+b(c-c_0)^{\rm T}w
+ b\rho (y,t) 
\\
&=(F+bc_{0u})V_u+b[c_{0y}^{\rm T}V_y+k_{0y}y
\\
&+(c-c_0)^{\rm T}w + \rho(y,t)].
\end{array}
\end{equation}
The matrix $F+bc_{0u}$ has Hurwitz characteristic polynomial $R(\lambda)$ due to the problem statement.
Hence, the function $V_u(t)$ is ultimately bounded because the term in square brackets in \eqref{eq8_adapt_filt_proof} is bounded.
Then the regression vector $w(t)$ is also ultimately bound.
From the condition $\lim\limits_{t \to \infty} y(t) = 0$ and ultimately boundedness of $w(t)$ it follows from \eqref{eq6_adapt} that $\lim\limits_{t \to \infty }\dot{c}(t)=0$.
Therefore, $c(t)$ is an ultimately bounded function.
Then \eqref{eq2_adapt} implies boundedness of the control law.
As a result, all signals are bounded in the closed-loop system.
\end{proof}

\textit{Example 3}.
Consider the unstable system \eqref{eq1_simple} with unknown 
$Q(p)=(p-1)^3$, $R(p)=(p+1)^2$, and $k=1$.

Define
$F=\begin{bmatrix}
0 & 1\\
-1 & -2
\end{bmatrix}$
in filters \eqref{eq1_adapt_filt}. 
Let $\gamma=0.1$ in \eqref{eq6_adapt}.
Consider various types of the density function $\rho(y,t)$ in \eqref{eq2_adapt}.

1) For $\rho(y,t)= - \alpha y$ the closed-loop system \eqref{eq3_adapt} has an equilibrium point $y=0$.
Substituting $\rho(y,t)$ into \eqref{eq7_adapt}, one gets $\dot{V}= -\alpha y^2 < 0$ in $D_S = \mathbb R \setminus \{0\}$.
We have obtained the classical problem of adaptive stabilization, which is described in detail in \cite{Fradkov99,Annaswamy21}.
Fig.~\ref{Fig_ad_1} (see only the trajectory entering the gray area) shows the transients for $\alpha=1$ and $p^2y(0)=py(0)=0$, $y(0)=4$.

2) For $\rho(y,t)= \alpha \ln\frac{g-y}{g+y}$, $g(t)>0$ the closed-loop system \eqref{eq3_adapt} has an equilibrium $y=0$.
Substituting $\rho(y,t)$ into \eqref{eq7_adapt}, we have $\dot{V}= \alpha \ln\frac{g-y}{g+y} y <0$ in $D_S = \{ y \in \mathbb R: -g<y<g\}$.
Moreover, $\rho(y,t) \to -\infty$ for $y \to g$ and $\rho(y,t) \to +\infty$ for $y \to -g$.
We have obtained a stabilization problem with symmetric constraints $-g$ and $g$.
Fig.~\ref{Fig_ad_1} shows the transients for $\alpha=1$ (trajectory inside the dotted tube), $p^2y(0)=py(0)=0$, $y(0)=4 $ and $g(t)=(4e^{-3t}+1)h(t)$,
$h(t)=\begin{cases}
    1 & t \leq 1,\\
    0.4 & t > 1.
  \end{cases}$
It can be seen that, in contrast to the classical control scheme \cite{Fradkov99,Annaswamy21} (the trajectory corresponding to $\rho(y,t)= - \alpha y$),
setting a density function of the form $\rho(y,t)= \alpha \ln\frac{g-y}{g+y}$ guarantees that the transients are in the tube at any time.

\begin{figure}[h]
\center{\includegraphics[width=0.5\linewidth]{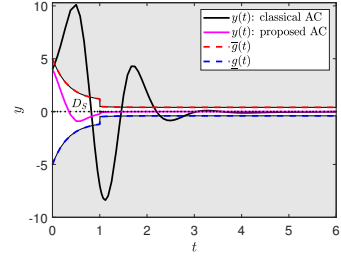}}
\caption{The transients in an adaptive control scheme with density functions $\rho(y,t)= - \alpha y$ (curve crossing the gray area) and $\rho(y,t)= \alpha \ln\frac{ g-y}{g+y}$ (a curve inside a tube with dashed borders).}
\label{Fig_ad_1}
\end{figure}

3) For $\rho(y,t)= \alpha \ln\frac{\overline{g}-y}{y-\underline{g}}$ the closed-loop system \eqref{eq3_adapt} has an equilibrium $y =\frac{\overline{g}+\underline{g}}{2}$.
Substituting $\rho(y,t)$ into \eqref{eq7_adapt}, we have $\dot{V}= \alpha \ln\frac{\overline{g}-y}{y-\underline{g}} y <0$ in
$D_S=\left\{y \in \mathbb R_+: \frac{\overline{g}+\underline{g}}{2} < y < \overline{g} \right\}$ and
$\dot{V}= \alpha \ln\frac{\overline{g}-y}{y-\underline{g}} y >0$ in
$D_U=\left\{y \in \mathbb R_+: \underline{g} < y < \frac{\overline{g}+\underline{g}}{2} \right\}$.
Also, $\dot{V}= \alpha \ln\frac{\overline{g}-y}{y-\underline{g}} y >0$ in
$D_U=\left\{y \in \mathbb R_-: \frac{\overline{g}+\underline{g}}{2} < y < \overline{g} \right\}$ and
$\dot{V}= \alpha \ln\frac{\overline{g}-y}{y-\underline{g}} y >0$ in
$D_S=\left\{y \in \mathbb R_-: \overline{g} < y < \frac{\overline{g}+\underline{g}}{2} \right\}$.
Moreover, $\rho(y,t) \to -\infty$ for $y \to \overline{g}$ and $\rho(y,t) \to +\infty$ for $y \to \underline{g }$ for $y \in \mathbb R_+$,
as well as $\rho(y,t) \to +\infty$ for $y \to \overline{g}$ and $\rho(y,t) \to -\infty$ for $y \to \underline{ g}$ for $y \in \mathbb R_-$.
We have obtained a stabilization problem with asymmetric constraints $\overline{g}$ and $\underline{g}$.
Fig.~\ref{Fig_ad_3} shows the transients for $\alpha=5$, $\overline{g}=4e^{-0.1t}+0.1$, $\underline{g}=3e^{-0.1 t}-0.1$ and $p^2y(0)=py(0)=0$, $y(0)=4$.

\begin{figure}[h]
\center{\includegraphics[width=0.5\linewidth]{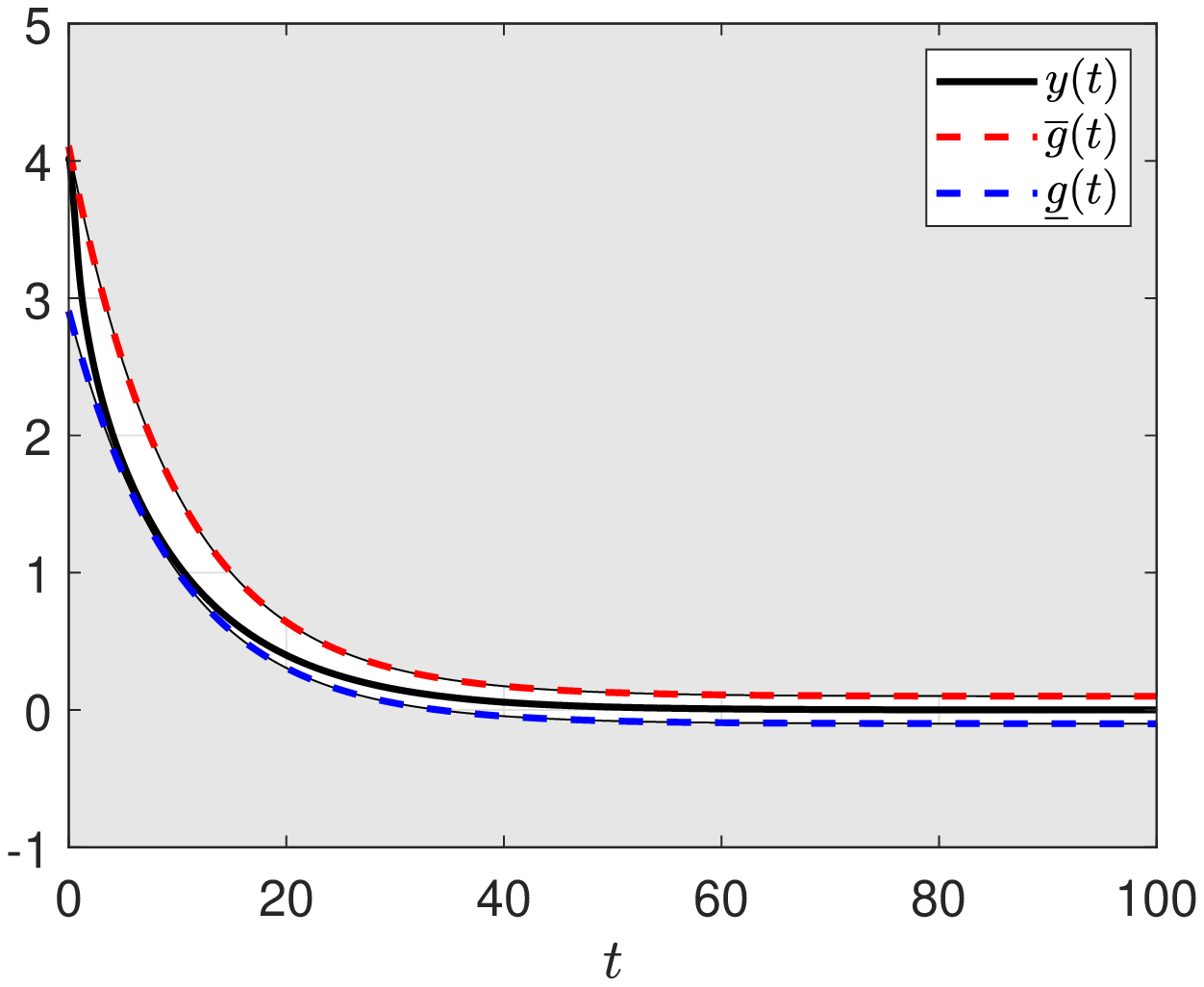}}
\caption{The transients in an adaptive control scheme with the density function $\rho(y,t)= \alpha \ln\frac{\overline{g}-y}{y-\underline{g}}$.}
\label{Fig_ad_3}
\end{figure}

4) For $\rho(y,t)= - \alpha (y-y_m)$ the closed-loop system \eqref{eq3_adapt} has an equilibrium $y=y_m$.
Substituting $\rho(y,t)$ into \eqref{eq7_adapt}, we have $\dot{V}= - \alpha (y-y_m) y <0$ in $D_S=\{y \in \mathbb R_+ : y >y_m \}$
and $\dot{V}= - \alpha (y-y_m) y >0$ in $D_U=\{y \in \mathbb R_+: y <y_m\}$.
Also, $\dot{V}= - \alpha (y-y_m) y <0$ in $D_S=\{y \in \mathbb R_-: y < y_m\}$ and $\dot{V}= - \alpha (y-y_m) y >0$ in $D_U=\{y \in \mathbb R_-: y > y_m\}$.
We get the problem of tracking $y$ to $y_m$.
Fig.~\ref{Fig_ad_4} shows the transients for $\alpha=100$, $y_m=e^{-0.1t}\sin(t)P(t)$, $P(t) \in [- 1,1]$ is a rectangular pulse generator with a switching period of $2.5$ [s] and $p^2y(0)=py(0)=0$, $y(0)=1$.

\begin{figure}[h]
\center{\includegraphics[width=0.5\linewidth]{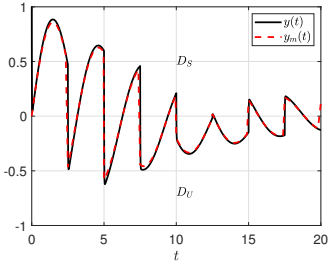}}
\caption{The transients in an adaptive control scheme with the density function $\rho(y,t)= - \alpha (y-y_m)$.}
\label{Fig_ad_4}
\end{figure}

5) For $\rho(y,t)= - \alpha \ln(y-g)$, $g(t) \geq -1$ the closed-loop system \eqref{eq3_adapt} has an equilibrium $y=g+1$.
Substituting $\rho(y,t)$ into \eqref{eq7_adapt}, we have $\dot{V}= - \alpha \ln(y-g) y <0$ in $D_S=\{y \in \mathbb R_+ : y > g+1\}$ and $\dot{V}= \alpha \ln(y-g) y >0$ in $D_U=\{y \in \mathbb R_+: y > g+1\}$.
Also, $\rho(y,t) \to -\infty$ for $y \to g$.
Therefore, one obtains the problem of sliding along the surface with border $g$.
Fig.~\ref{Fig_ad_5} shows the transients for $\alpha=10$, $g=2e^{-0.1t}-1$ and $p^2y(0)=py(0)=0$, $y(0)=4$.

\begin{figure}[h]
\center{\includegraphics[width=0.5\linewidth]{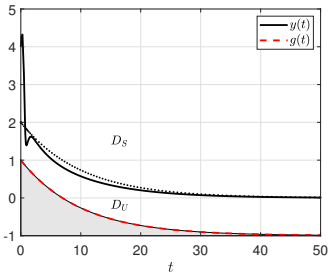}}
\caption{The transients in an adaptive control scheme with the density function $\rho(y,t)= - \alpha \ln(y-g)$.}
\label{Fig_ad_5}
\end{figure}

\section{\uppercase{Conclusion}}
\label{Sec6}

The paper considers a class of dynamical systems, called density systems, which contain the density function that specifies the properties of space.
By defining the properties of this function, one can influence the behaviour of the investigated system.
This conclusion is further used for the design of control laws.
It is shown that for various typos of the density function, it is possible to obtain both classical control laws and new ones that allow the formation of new target requirements for the system.
In particular, an example of design an adaptive control law with a guarantee of transients in a tube specified by the designer is given, while classical adaptive control provides only the ultimate boundedness of trajectories.
In this case, the parameters of the tube are set using the density function, which sets the density of the space.
The simulation results confirmed the theoretical conclusions.

In the paper, as an example of the application of the density function with known control schemes, it is shown how existing control algorithms can be modified to obtain a new quality of transients.
In the future works, the properties of density systems can be applied to more complex control algorithms, such as output control of systems with any relative degree, observer based control, sliding mode control, etc.

\end{document}